\documentclass[12pt]{iopart}
\usepackage{graphicx}
\usepackage{amsthm}
\usepackage{iopams}
\usepackage{color}
\usepackage{bbm}

\newcommand{\ket}[1]{\mbox{$ | #1 \rangle $}} 
\newcommand{\bra}[1]{\mbox{$ \langle #1 | $}}

\newcommand{\ie}{\emph{i.e.}}

\newtheorem{proposition}{Proposition}[section]

\newcommand{\db}{\ensuremath{\mathrm{db}}}
\newcommand\T{\rule{0pt}{3.2ex}}
\newcommand\B{\rule[-1.4ex]{0pt}{0pt}}

\begin{document}

\title{Detector decoy quantum key distribution}

\author{Tobias Moroder$^{1,2}$, Marcos Curty$^3$ and Norbert
  L\"{u}tkenhaus$^{1,2}$}   

\address{$^1$ Quantum Information Theory Group, Institute of
  Theoretical Physics I, and Max-Planck Research Group for Optics,
  Photonics and Information, University Erlangen-Nuremberg,
  Staudtstrasse 7/B2, 91058 Erlangen, Germany} 
\address{$^2$ Institute for Quantum Computing, University of Waterloo, 200
  University Avenue West, N2L 3G1 Waterloo, Canada}
\address{$^3$ ETSI Telecomunicaci\'on, Department of Signal Theory and
  Communications, University of Vigo, Campus Universitario, E-36310
  Vigo, Spain}

\eads{\mailto{tmoroder@iqc.ca}}

\begin{abstract}
Photon number resolving detectors can enhance the performance of many
practical quantum cryptographic setups. In this paper, we employ a
simple method to estimate the statistics provided by such a photon
number resolving detector using only a threshold detector together
with a variable attenuator. This idea is similar in spirit to that of
the decoy state technique, and is specially suited for those scenarios
where only a few parameters of the photon number statistics of the
incoming signals have to be estimated. As an illustration of the
potential applicability of the method in quantum communication
protocols, we use it to prove security of an entanglement based
quantum key distribution scheme with an untrusted source without the
need of a squash model and by solely using this extra idea. In this
sense, this \emph{detector decoy method} can be seen as a
different conceptual approach to adapt a single photon security proof
to its physical, full optical implementation. We show that in this
scenario the legitimate users can now even discard the double click
events from the raw key data without compromising the security of the
scheme, and we present simulations on the performance of the 
BB84 and the 6-state quantum key distribution protocols.   
\end{abstract}

\section{Introduction}

Among the research performed nowadays in order to increase the secret key rate
and distance that can be covered by quantum key distribution (QKD) systems,
one can distinguish three main work areas which are closely related to each
other \cite{qkd,qkd_dusek,qkd_valerio}. On the one hand, we have the
development of new proof techniques, together with better classical
post-processing protocols, that are able to further extend the proven
secure regimes for idealized QKD schemes, typically based on the
transmission of two-level quantum systems (qubits)
\cite{mayers96a,mayers01b,lo99b,shor00a,chau02a,gottesman03a,devetak05a,
  koashi06z, kraus05a,renner05a}. On the other hand, we find the continuous 
improvements that come from the technological side. Especially, the
design of better light sources and better detectors should give us
provable secure communications over a growing distance
\cite{stucki02a,takesue07a,rosenberg07a}. Finally, we have the 
research which aims to close the gap between theoretical security
concepts for idealized QKD schemes and their experimental realization
\cite{hwang03a,lo05a,wang05a,gottesman04a,inamori07a,scarani04a,kraus07a}.    

The awareness of such a theory-experiment gap was triggered by the important
deviations present in practical QKD setups with respect to their
original theoretical proposal, which usually demands technologies that
are beyond our present experimental capability. Especially, the signal
states emitted by the source, instead of being single photons, are
usually weak coherent pulses which can contain more than one
photon prepared in the same polarization state. Now, the eavesdropper
(Eve) is no longer limited by the no-cloning theorem \cite{wooters82a},
since the multiphoton pulses provide her with perfect copies of the
single photon. In this scenario, she can perform the so-called {\it
photon-number splitting} attack \cite{huttner95a,brassard00a}. This
attack gives Eve full information about the part of the key generated from the
multiphoton signals, without causing any disturbance in the signal
polarization. The use of weak coherent pulses jeopardizes the security of QKD
protocols, and lead to limitations of rate and distance that can be
achieved by these techniques. For instance, it turns out that the BB84
protocol \cite{bennett84a} with weak coherent pulses can give a key
generation rate of order $O(\eta^2)$ \cite{gottesman04a,inamori07a},
where $\eta$ denotes the transmission efficiency of the quantum channel.  
 
A significant improvement of the secret key rate can be obtained when the
hardware is slightly modified. In particular, by using the so-called decoy
state method \cite{hwang03a,lo05a,wang05a}. In this approach, the
sender (Alice) varies, independently and at random, the mean photon
number of each signal state sent to the receiver (Bob) by employing
different intensity settings. Eve does not know the mean photon number
of each signal sent. This means that the gain and the quantum bit
error rate (QBER) of each signal can only depend on its photon number
but not on the particular intensity setting used to generate it. From
the measurement results corresponding to different intensity settings,
it turns out that the legitimate users can estimate the gain and the
QBER associated to each photon number state and, therefore, obtain a
better estimation of the behavior of the quantum channel. This
translates into an enhancement of the resulting secret key rate. The
decoy state technique has been successfully implemented in several
recent experiments \cite{rosenberg07a,zhao06a,schmitt07a,dynes07a,yuan07a},
and it can deliver a key generation rate of the same order of
magnitude like single photon sources, \ie, $O(\eta)$
\cite{hwang03a,lo05a,wang05a}.     

The use of photon number resolving (PNR) detectors instead of
threshold detectors can also enhance the performance of many practical
QKD setups. For instance, in those situations where the decoy state
method cannot be easily applied. This is the case, for example, in a
QKD scheme with an untrusted source where the legitimate users cannot
control the mean photon number of the signal states emitted. In these
scenarios, it might be still very useful for the legitimate users to
have access to the photon number statistics of the incoming
signals. To simplify the security analysis, it is very tempting to
assume a squash model for Alice's and Bob's detection setup
\cite{PDC_qkd}. This model maps each incoming signal to a one-photon
polarization space followed by a measurement in this smaller
dimensional Hilbert space. The squash model has been recently proven
to be correct for the case of the BB84 protocol
\cite{squash1,squash2}. However, in Ref.~\cite{squash2} it was shown
that the same does not hold, for instance, for the active basis choice
measurement in the 6-state protocol \cite{bruss98a}.   

In this paper, we analyze a simple method to estimate the photon
number statistics provided by a PNR detector using only a practical
threshold detector together with a variable attenuator. 
The basic idea consists in measuring the incoming light field with a
set of simple threshold detectors with different efficiencies and thus
one obtains more information about the underlying distribution of the photons. 
This technique has its origin in the field of quantum metrology as
discussed in Refs.~\cite{mogli98,rossi04a,zambra06a} and has been
successfully implemented in some recent experiments \cite{zambra05a,
genovese06a,genovese08a} which show the practical feasibility of the method.  
Here we apply it for the first time to various realistic QKD
scenarios. For instance, it can be used to prove security of those QKD
setups that do not have a squash model \cite{squash2}, or in those
security proofs that only require the statistics given
by a PNR detector \cite{tamaki06suba}. 
However if one likes to employ this technique in the QKD context one
has to estimate the photon number statistics under the worst case
assumption for Alice and Bob. Thus the known reconstruction method
from Refs.~\cite{rossi04a,zambra06a} which considers only a truncated
version of the problem (under the additional constraints of only a
finite, small number of experimental runs) cannot be directly used for
the QKD setups. Nevertheless, the central idea of the problem remains
unchanged. In fact, this method can be considered as the decoy state
technique applied to the detector side: If Alice and Bob vary,
independently and at random, the detection efficiency of their
apparatus then they can estimate the photon number statistics of the
signals received. Note that the photon number distribution of the
incoming signals cannot depend on the particular efficiency setting
used to measure them. Therefore, from now on, we shall refer to this estimation
procedure as \emph{detector decoy} to emphasize its connection and
applicability to QKD. The detector decoy idea can be employed both for
calibrated and uncalibrated devices \cite{qkd_valerio}. The essential
requirement here is that Eve cannot modify the variable attenuator
employed by the legitimate users to vary the detection efficiency of
their setups. 

Specifically, we apply the detector decoy method to two different QKD
scenarios. In the first one, we prove the security of an entanglement
based QKD scheme with an untrusted source solely by using this
estimation procedure. More precisely, we investigate the situation
where Alice and Bob perform either the BB84 or the 6-state protocol,
and we compare the resulting key rates with those arising from a
security proof based on the squash model assumption \cite{PDC_qkd}. In
contrast to this last scenario, now Alice and Bob can now even sift
out the double click events without compromising the security of the
scheme. Note, however, that we compare different situations, since
they require different detection setups. As a second potential
application, we analyze an alternative experimental technique, also
based on the detector decoy method, to estimate the photon number
statistics of the output signals in a ``Plug \& Play'' configuration
\cite{stucki02a,muller97a,ribordy00a}.      

The paper is organized as follows. In Sec.~\ref{seca} we describe in detail
the detector decoy idea to estimate the photon number statistics of an
optical signal by means of a threshold detector combined with a variable
attenuator. Next, we apply this method to different practical QKD
scenarios. In particular, Sec.~\ref{app} analyzes the security of an
entanglement based QKD scheme with an untrusted source. Then, in
Sec.~\ref{sec_pp} we propose an experimental technique to estimate the photon
number statistics of the output signals in a ``Plug \& Play''
configuration. Finally, Sec.~\ref{CONC} concludes the paper with a summary.

\section{Estimating photon number statistics}\label{seca}

Most of the security proofs for QKD only require the estimation of a
few parameters related with the photon number statistics of the
incoming signals. These parameters suffice to obtain good lower bounds
for the achievable secret key rate. Here we discuss and explain the
technique to measure the photon number distribution of an optical
signal by means of a practical threshold detector in combination with
a variable attenuator. As mentioned this idea has been introduced 
previously in the scientific literature before,
cf. Refs.~\cite{mogli98,rossi04a,zambra06a}. The current discussion
differs in the particular way of how one reconstructs part of the
photon number distribution from the observed measurement outcomes;
here we need to provide ultimate bounds for certain photon number
parameters, cf. Prop.~\ref{prop_n}, that are valid without any further
assumptions on the signal states.
Of course, the photon number distribution can also be obtained by using
directly PNR detectors \cite{cabrera98a,kim99a}. This approach
would provide Alice and Bob not only with the distribution of
the incoming signals but also with the number of photons contained in each of
them. Unfortunately, most of the methods proposed so far in the
literature to construct this type of detectors result in devices with
low detection efficiencies and which are unable to operate at
room-temperatures. An interesting alternative is that based on
detection schemes which use, for instance, time multiplexing
techniques \cite{achilles03a,lundeen08a}. This method has allowed, for
example, a passive decoy selection in QKD,
cf. Ref.~\cite{mauerer07a,mauerer08a}. In this last case, however, the
achievable photon number resolution depends on the number of detectors
and on the number of spatially, or temporal, separate bins used.  

\begin{figure}[h!]
  \begin{center}
    \includegraphics[scale=0.7]{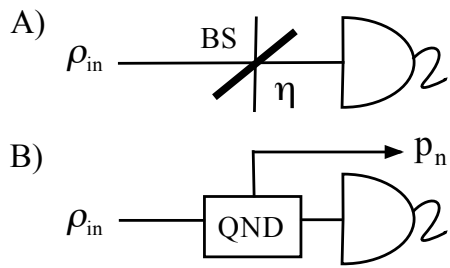}
  \end{center}
  \caption{Case A) Detection setup which combines a beam splitter of
    transmittance $\eta$ together with an ideal threshold detector. The
    incoming signal state $\rho_{\rm in}$ is given by Eq.~\ref{inputs}. Case
    B)  If one varies the transmittance $\eta$ of the beam splitter, then the
    detection setup given in Case A is equivalent to a {\it quantum
      non-demolition} (QND) measurement that provides only the photon
    number statistics of the incoming signals. \label{detector_fig}}  
\end{figure}

The basic idea of the detector decoy technique can be highlighted with
a simple example. Let the optical signals arriving to a {\it perfect}
threshold detector be either a single photon pulse or a strong pulse
containing several photons. These two signals will always produce a
single click in the detector. Therefore, in this scenario both events
cannot be distinguished. Suppose now that a beam splitter with very
low transmittance is placed before the detector and the same kind of
signals is received once more. Then, the single photon pulse will
produce much less clicks than the strong pulse. In principle, both
events can now be distinguished. That is, by varying the transmittance
of the beam splitter more information about the photon number
distribution of the signals received becomes available.    

Let $\eta$ denote the transmittance of such a beam splitter (Case A in
Fig.~\ref{detector_fig}). The combined detection setup can be
characterized by a positive operator value measure (POVM) which
contains two elements, $F_{\rm vac}(\eta)$ and $F_{\rm click}(\eta)$,
given by \cite{detector}  
\begin{equation}
  \label{detector}
  F_{\rm vac}(\eta)=\sum_{n=0}^{\infty}  (1-\eta)^n\Pi_n, 
\end{equation}
and $F_{\rm click}(\eta)=\mathbbm{1}-F_{\rm vac}$, where $\Pi_n$
represents the projector onto the $n$-photon subspace. That is, the
outcome of $F_{\rm vac}(\eta)$ corresponds to no click in the
detector, while the operator $F_{\rm click}(\eta)$ gives precisely one
detection click, which means at least one photon is detected. Suppose
for the moment that the input signal state is of the form   
\begin{equation}
  \label{inputs}
  \rho_{\rm in}=\sum_{n=0}^{\infty} p_n \rho_n,
\end{equation}
where the signals $\rho_n$ belong to the $n$-photon subspace. 

The probability of getting a click, that we shall denote as $p_{\rm
click}(\eta)$, depends on the transmittance $\eta$ of the beam
splitter. It can be calculated as  $p_{\rm click}(\eta)=\tr[F_{\rm 
click}(\eta)\rho_{\rm in}]$. Similarly, $p_{\rm vac}(\eta)=1-p_{\rm click}(\eta)$
represents the probability that the detector does not click. Using
Eq.~\ref{detector}, we find that this last quantity can be expressed as
$p_{\rm vac}(\eta)=\sum_{n=0}^{\infty}  (1-\eta)^np_n$. Now one can follow a
similar idea to that of the decoy state method. In particular, if the receiver
varies the transmittance $\eta=\{\eta_1, \ldots, \eta_M\}$ of the beam
splitter he can generate a set of linear equations with the
probabilities $p_n$ as the unknown parameters \cite{mogli98,rossi04a,zambra06a},
\begin{eqnarray}\label{lsys}
  p_{{\rm vac}}(\eta_1)&=&\sum_{n=0}^{\infty}  (1-\eta_1)^np_n, \nonumber \\
  &\vdots& \label{prob_vacuum} \\
  p_{{\rm vac}}(\eta_M)&=&\sum_{n=0}^{\infty}  (1-\eta_M)^np_n. \nonumber
\end{eqnarray}
From the observed data $p_{\rm vac}(\eta)$, together with the
knowledge of the transmittance $\eta$ used, the receiver can solve
Eq.~\ref{prob_vacuum} and obtain the value of $p_n$. For instance, in 
the general scenario where he employs an infinity number of possible
decoy transmittances $\eta\in[0,1]$, he can always estimate any finite
number of probabilities $p_n$ with arbitrary precision. This result is
illustrated as Case B in Fig.~\ref{detector_fig}. On the other hand,
if the receiver is only interested in the value of a few probabilities
$p_n$, then he can estimate them by means of only a few different
decoy transmittances, like in the decoy state method
\cite{hwang03a,lo05a,wang05a,ma05a}. This last statement is given by
Proposition \ref{prop_n} for the case where the receiver only wants to
find worst case bounds for the probabilities $p_0$, $p_1$, and
$p_2$. This proposition can straightforwardly be generalized to cover
also the case of any other finite number of probabilities
$p_n$. Note, however, that it only constitutes a possible example of
an estimation procedure that provides the exact values of the
probabilities $p_n$ in the considered limit. In principle, many other
estimation techniques are also available, like linear programming
tools \cite{linearprogramming} or different ideas from the original
decoy state method \cite{tsurumaru}.   

\begin{proposition}\label{prop_n}[Finite settings] Consider the set
  of linear equations given by 
 \begin{equation}
   f(c)=\sum_{n=0}^\infty c^n x_n,
 \end{equation} 
 where the unknown parameters $x_n$ fulfill $x_n \geq 0 $ and
 $\sum_{n=0}^\infty x_n \leq C$ for a given constant $C$, and where $c$
 satisfies $c \in [0,1]$. Consider now three different settings
 $c_0=0$, $c_1$ and $c_2$. Then, the unknown variables $x_0$, $x_1$
 and $x_2$ satisfy, respectively, $x_0=f(c_0)=f(0)$,   
 \begin{equation}
   \label{eq:bound1}
   l_1(c_1)=\frac{f(c_1)-f(0)\left( 1- c_1^2 \right) -
     c_1^2 C }{c_1-c^2_1} \leq x_1 \leq  
   u_1(c_1)=\frac{f(c_1)-f(0)}{c_1},
 \end{equation} 
 and $l_2(c_1,c_2) \leq x_2 \leq u_2(c_1,c_2)$ with bounds 
 \begin{eqnarray}
   \label{eq:l2}
   l_2(c_1,c_2)&=&\frac{f(c_2)-f(0)\left(1-c_2^3\right) -
     u_1(c_1)\left(c_2-c_2^3 \right) - c_2^3 C}{c_2^2-c_2^3},  
      \\ 
   u_2(c_1,c_2)&=& \frac{f(c_2) - f(0) -c_2 l_1(c_1)}{c_2^2}.
 \end{eqnarray}
When $c_1=\Delta$ and $c_2=\sqrt{\Delta}$, the
 given bounds converge to the exact value of the variables $x_1$ and
 $x_2$ in the limit $\Delta \to 0$.   
\end{proposition}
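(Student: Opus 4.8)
The plan is to obtain every bound by the same elementary manipulation: split the series $f(c)=\sum_n c^n x_n$ at a suitable index, estimate the tail using only the two hypotheses $x_n\ge 0$ and $\sum_n x_n\le C$, and solve for the variable of interest. The identity $x_0=f(0)$ is immediate from $c_0=0$. For $x_1$ at setting $c_1$, write $f(c_1)=x_0+c_1x_1+\sum_{n\ge 2}c_1^n x_n$; dropping the non-negative tail gives $c_1x_1\le f(c_1)-f(0)$, hence the upper bound $u_1(c_1)$. For the lower bound I would instead bound the tail from above: since $c_1\in[0,1]$ we have $c_1^n\le c_1^2$ for $n\ge 2$, so $\sum_{n\ge 2}c_1^n x_n\le c_1^2\sum_{n\ge 2}x_n\le c_1^2(C-x_0-x_1)$; substituting and collecting the $x_1$ terms yields $(c_1-c_1^2)x_1\ge f(c_1)-f(0)(1-c_1^2)-c_1^2C$, which after dividing by $c_1-c_1^2>0$ is exactly $l_1(c_1)$.

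The bounds on $x_2$ use the same two moves at the setting $c_2$, with one extra step: the coefficient of $x_1$ no longer cancels, so it must be eliminated via the bounds already found for $x_1$. For the upper bound, $f(c_2)\ge x_0+c_2x_1+c_2^2x_2$ gives $c_2^2x_2\le f(c_2)-f(0)-c_2x_1$; replacing $x_1$ by its lower bound $l_1(c_1)$ keeps the right-hand side a valid upper bound and yields $u_2(c_1,c_2)$. For the lower bound, $c_2^n\le c_2^3$ for $n\ge 3$ together with $\sum_{n\ge 3}x_n\le C-x_0-x_1-x_2$ gives $f(c_2)\le x_0(1-c_2^3)+(c_2-c_2^3)x_1+(c_2^2-c_2^3)x_2+c_2^3C$; solving for $x_2$ and replacing $x_1$ by its upper bound $u_1(c_1)$ (legitimate since $x_1$ enters with the non-negative coefficient $c_2-c_2^3$ and a minus sign) gives $l_2(c_1,c_2)$ after dividing by $c_2^2-c_2^3>0$. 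I would also note that these lower bounds are only informative when positive; the genuine constraints are $x_i\ge\max\{0,l_i\}$, which does not affect the statement.

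For the convergence claim I would put $c_1=\Delta$, $c_2=\sqrt{\Delta}$ and track all remainders explicitly. The quantitative input is that for $t\in[0,1]$ and any $k\ge 1$, $\sum_{n\ge k}t^n x_n\le t^k\sum_{n\ge k}x_n\le t^kC$, so truncating $f$ after the $t^{k-1}$ term incurs only an $O(t^k)$ error. This immediately gives $u_1(\Delta)=x_1+O(\Delta)$ and $l_1(\Delta)=x_1+O(\Delta)$, so $x_1$ is squeezed. Substituting these into the $x_2$ bounds, the $O(\Delta)$ errors of the $x_1$ estimates enter divided by a factor of order $c_2=\sqrt{\Delta}$, hence become $O(\sqrt{\Delta})$, and combine with the intrinsic $O(\Delta^{3/2})$ remainder of $f(\sqrt{\Delta})$ truncated at order two; dividing by $c_2^2=\Delta$ leaves $u_2(\Delta,\sqrt{\Delta})=x_2+O(\sqrt{\Delta})$ and likewise for $l_2$, so $x_2$ is squeezed as well.

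The argument is entirely elementary, so there is no single hard step. The one place that needs care is the convergence of the $x_2$ bounds: the errors made in bounding $x_1$ feed into them, and one must check that the particular scaling $c_2=\sqrt{c_1}$ is exactly what makes these propagated errors disappear in the limit. I would also record the harmless boundary behaviour at $c_1,c_2\in\{0,1\}$ (where some denominators vanish and the corresponding inequalities are read as limits or are vacuous), and observe that one may take $C=\sum_n x_n$, so the tail estimates are as tight as the hypothesis permits.
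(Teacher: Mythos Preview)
Your proposal is correct and follows essentially the same route as the paper: split the series at the relevant index, bound the tail using $x_n\ge 0$ and $\sum_n x_n\le C$, solve for the target variable, and feed the $x_1$ bounds into the $x_2$ bounds. The paper only writes out $u_1$ and $l_2$ explicitly and handles the convergence via the Taylor-derivative identification $x_n=\tfrac{1}{n!}f^{(n)}(0)$ rather than your explicit $O(\cdot)$ remainder tracking, but the content is the same (and your version has the minor advantage of giving the rates $O(\Delta)$ and $O(\sqrt{\Delta})$).
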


\begin{proof}
We present the explicit derivation of the upper bound $u_1(c_1)$ and
of the lower bound $l_2(c_1,c_2)$. The other bounds can be obtained in
a similar way. The basic idea is as follows: We first upper bound
$x_1$ from the knowledge of $f(c_1)$; afterwards this result is used
to lower bound $x_2$ given the value of $f(c_2)$. Starting with the
definition of $f(c_1)$ we obtain    
\begin{equation}
  \label{una}
  f(c_1) = x_0 + c_1 x_1 + \sum_{n=2}^\infty c_1^n x_n \geq f(0) + c_1 x_1,
\end{equation}
where we have used the fact that $x_0=f(0)$ and $x_n \geq 0 $. This inequality
already gives the upper bound $u_1(c_1)$ on $x_1$. To obtain the lower
bound $l_2(c_1,c_2)$, note that the other extra condition on the open
parameters $x_n$ gives  
\begin{equation}
  \label{dos}
  \sum_{n=N+1}^\infty x_n \leq C - \sum_{n=0}^N x_n, \;\; \forall N \in
  \mathbbm{N}.
\end{equation}
Using a similar inequality to that in Eq.~\ref{una} for the definition
of $f(c_2)$ in combination with the condition given by Eq.~\ref{dos} we obtain
\begin{eqnarray}
  \label{tres}
  \nonumber
  f(c_2) &\leq & x_0 + c_2 x_1 + c_2^2 x_2 + c_2^3\left( C-x_0-x_1-x_2
  \right) \\ 
  &\leq& f(0) (1-c_2^3) + u_1(c_1) (c_2-c_2^3) + x_2 (c_2^2-c^3_2) + c_2^3 C. 
\end{eqnarray}
In the second step we have employed again the fact that $f(0)=x_0$
together with the upper bound for $x_1 \leq u_1(c_1)$. Eq.~\ref{tres}
directly delivers the lower bound given by Eq.~\ref{eq:l2}. 

Let us now prove that both bounds converge. The unknown parameters
$x_n$ are exactly the Taylor expansion coefficients of the function
$f(c)$ evaluated at the point $c=0$, \ie, 
\begin{equation}
  x_n=\frac{1}{n!}f^{(n)}(0)=\frac{1}{n!}\frac{d^n}{d c^n} f(c) \Big|_{c=0}.
\end{equation}
This means that the upper bound $u_1(c_1)$ becomes exact if one finds the
appropriate setting to estimate the first derivative. Choosing
$c_1=\Delta$ directly gives 
\begin{equation}
  \lim_{\Delta \to 0} u_1(\Delta) = \lim_{\Delta \to 0}
  \frac{f(\Delta)-f(0)}{\Delta} = f^{(1)}(0)=x_1.
\end{equation}
For the lower bound $l_2(c_1,c_2)$ one has to perform the limit $c_1
\to 0$ prior to $c_2 \to 0$. Hence, one selects the setting
$c_2=\sqrt{\Delta}>\Delta=c_1$. Using the Taylor expansion series,
\begin{eqnarray}
  f(c_1)=f(\Delta) &\approx& f(0)+ f^{(1)}(0) \Delta, \\
  f(c_2)=f(\sqrt{\Delta}) &\approx& f(0)+ f^{(1)}(0) \sqrt{\Delta} +
  \frac{1}{2} f^{(2)}(0) \Delta,
\end{eqnarray}
we obtain
\begin{eqnarray}
  \nonumber 
  \lim_{\Delta \to 0} l_2(\Delta,\sqrt{\Delta}) &=& \lim_{\Delta \to 0}
  \frac{1}{1-\sqrt{\Delta}} \left\{ \frac{1}{2} f^{(2)}(0) +
    \sqrt{\Delta}\left[ f(0)-C+f^{(1)}(0) \right] \right\} \\ &=&
  \frac{1}{2} f^{(2)}(0) = x_2. 
\end{eqnarray}
This proves that also the lower bound on $x_2$ becomes exact
in the considered limit.
\end{proof}
\noindent For the further discussion we shall assume that one can
always obtain the exact values of the probabilities $p_n$, hence we
will ignore any finite size effects from now on. 

So far, we have analyzed the case of an ideal threshold detector. When 
the detector has some finite detection efficiency $\eta_{\rm det}$ and
shows some noise in the form of dark counts which are, to a good
approximation, independent of the incoming signals, such a detector
can be described by a beam splitter of transmittance $\eta_{\rm det}$
combined with a noisy detector \cite{yurke}. In this scenario the
argumentation presented above still holds, and the detector decoy
method can also be used in the calibrated device scenario. Note that
the operator $F_{\rm vac}(\eta)$ is now given by    
\begin{equation}
  \label{eq:noclickPOVM}
  F_{\rm vac}(\eta)=(1-\epsilon)\sum_{n=0}^{\infty}  (1-\eta\eta_{\rm
  det})^n\Pi_n,  
\end{equation}
where $\epsilon$ represents the probability to have a dark
count. In this case, $p_{\rm vac}(\eta)$ has the form 
\begin{equation}
  \label{eq:inefficient}
  p_{\rm vac}(\eta)=(1-\epsilon)\sum_{n=0}^{\infty}  (1-\eta\eta_{\rm
  det})^n{}p_n.  
\end{equation}
Again, if one varies $\eta$ then, from the measured  data $p_{\rm
  vac}(\eta)$ together with the knowledge of the parameters $\eta$,
$\eta_{\rm det}$ and $\epsilon$, the receiver can deduce
mathematically\footnote{Note that the convergence result given in 
Proposition~\ref{prop_n} does not apply directly to this scenario. 
However, it is still correct if one ignores the detector
  efficiency part. This is the case considered in Sec.~\ref{app}, when we 
  analyze the security of QKD schemes.} the value of
the probabilities $p_n$.      

The results provided in this section rely on the description of the
detectors given by Eq.~\ref{detector} and Eq.~\ref{eq:noclickPOVM}. However, 
there are many different ways to model the exact behavior
of an imperfect detector, and quite often the model is adapted
to the explicit situation for which one wants to use the calculated
data. The probability of a no click outcome given by Eq.~\ref{lsys} and
Eq.~\ref{eq:inefficient} describes the typical QKD situation quite
accurate (see, \textit{e.g.}, Ref.~\cite{nor_99}). Of course, whenever
this situation changes the exact analysis need to be
adapted. Nevertheless, the main idea behind the detector decoy method
stays invariant. Via the observations on several different input
distributions $\{ p_n(\eta) \}$, that directly depend on the incoming
photon number distribution $\{ p_n \}$ by means of an explicit, known
transformation rule (binomial transformation in the case of the beam
splitter) one can obtain more information about the incoming photon
statistics.

\section{Entanglement based QKD schemes with an untrusted
  source}\label{app}

In this section we combine the detector decoy idea with the security
statement for an entanglement based QKD scheme with an untrusted
source. The schematic setup of the experiment is shown in
Fig.~\ref{PDC_QKD}. The source, which is assumed to be under Eve's
control, is placed between the two receivers. In the ideal case, this
source produces entangled states that are sent to Alice and Bob. The
entanglement is contained in the polarization degree of freedom of the
light field. This means that at least two different optical modes have
to be considered for each side. On the receiving side, we assume that
both measurement devices only act onto these two modes. For
simplicity, we restrict ourselves to the familiar active polarization
measurement setup, in which each party actively chooses the   
measurement basis $\beta$. In the BB84 protocol each
receiver can choose between two different basis, while in the
6-state protocol all three different polarization axis can be
selected. Each measurement device consists of a polarizing beam
splitter that spatially separates the two incoming 
modes according to the chosen polarization basis $\beta$, followed 
by two threshold detectors on the two different output modes of the
beam splitter. The analysis for other measurement devices, like for
example a passive measurement setup is  completely
analogous. Entanglement based schemes constitute a very promising
alternative to implement QKD over long distances. In fact, they 
hold the theoretical distance record for a QKD scheme without quantum 
repeaters so far, cf. the simulation in Ref.~\cite{PDC_qkd}. This 
type of protocols have been successfully implemented in many different recent
experiments (See, \textit{e.g.},
Ref.~\cite{urssinnlos,ling08a,erven08} and references therein.), and
they are a suitable candidate to realize earth-satellite 
QKD links \cite{spacepaper}. For more details on the setup, or on the
measurement apparatus, we refer the reader to
Refs.~\cite{PDC_qkd,nor_99}.  
\begin{figure}[ht]
  \centering
  \resizebox{100mm}{!}{\includegraphics{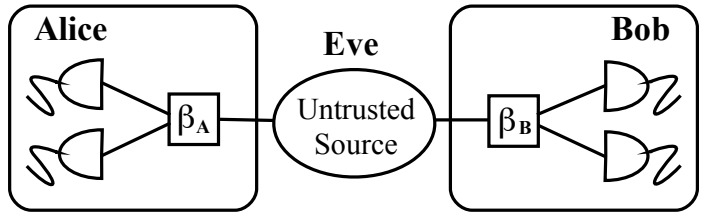}}
  \caption{Schematic diagram of an entanglement based QKD scheme with an
    untrusted source for the case of an active choice of the
    measurements basis $\beta_A, \beta_B$ respectively.}
  \label{PDC_QKD}
\end{figure}

Section~\ref{secret_rate} includes the security analysis for
an entanglement based QKD scheme. We follow the security proof technique   
provided in Ref.~\cite{kraus05a,renner05a,kraus07a}. Using this
approach allows us to 
directly pinpoint the usefulness of the detector decoy method in the
security proof. The main result of Sec.~\ref{secret_rate} is given by
Eq.~\ref{rate_mc}, which shows that the final lower bound on the
secret key rate only depends on a few essential parameters of the
system. Then, in Sec.~\ref{decoy_est} we employ the detector
decoy idea to estimate these parameters for the active measurement
scheme. Finally, Sec.~\ref{evalua} contains the simulation of a real
QKD experiment. There, we compare the detector decoy method with other
proof techniques.

\subsection{Secret key rate}\label{secret_rate}

We discuss the security level in the case of collective attacks. More
precisely, we assume that the three parties Alice, Bob and Eve share an
unlimited number of copies of the \emph{same} state $\ket{\Psi}_{\rm ABE}$. 
Once Alice and Bob have received their part of the quantum state, they
measure it to obtain information about the  state
$\rho_{\rm{AB}}=\tr_{\rm E} \left( \ket{\Psi}_{\rm ABE} \bra{\Psi}
\right)$. The POVMs used by Alice and Bob are denoted, respectively,
as $\{ F^{\rm A}_i \}$ and $\{ F^{\rm   B}_j\}$. They contain
\emph{all} the measurement operators that Alice and Bob perform during
the protocol, \ie, they also include the detector decoy
measurements. These additional measurements enable the legitimate
users to estimate some of the crucial parameters of the key rate
formula with high confidence. Let us further
assume that all the measurement operators are invariant under a
projection measurement of the total number of photons
\cite{nor_99}. That is, each
of the elements $F^{\rm A}_i$ or $F^{\rm B}_j$ satisfies   
\begin{equation}\label{mnw}
  F_i = \sum_{n=0}^\infty \Pi_n F_i \Pi_n.
\end{equation}
Using this assumption, one can consider a slightly
different, but completely equivalent, scenario for the distribution of
the quantum states. The new scenario has the advantage that it allows
a direct application of a known key rate formula based on
unidirectional classical error correction and privacy amplification
\cite{kraus05a,renner05a,kraus07a}. Proposition~\ref{prop}, however, holds
independently of whether we restrict ourselves to unidirectional or
bidirectional classical communication protocols in the post-processing
stage.  

\begin{proposition}\label{prop} Whenever Alice and Bob use photon
  number diagonal measurement devices, then the secret key rate in the
  following two scenarios is the same:  
  \begin{enumerate}
  \item Eve distributes pure quantum states $\ket{\Psi}_{ \rm ABE}$
  from a given set $\mathcal{P}$ which contains the purifications of the
  states $\rho_{ \rm AB}=\tr_{\rm E}( \ket{\Psi}_{\rm ABE}\bra{\Psi})$ that are
  compatible with the observed measurement data.  
  \item Alice, Bob and Eve share tripartite states
  $\rho_{\rm ABE}=\tr_{\rm R}(\ket{\Phi}_{\rm ABER}\bra{\Phi}) \in
  \mathcal{S}$ which originate from a four-party state of the form  
  \begin{equation}
    \label{eq:shieldstate}
    \ket{\Phi}_{ \rm ABER}=\sum_{n,m=0}^\infty \sqrt{p_{nm}}
    \ket{\phi_{nm}}_{\rm ABE} \ket{n,m}_{\rm R},   
  \end{equation}
  where $\ket{\phi_{nm}}_{\rm ABE}$ represents a state that
  contains $n$ photons in the mode destined to Alice and $m$ photons in the
  mode for Bob, and where $\ket{n,m}_{\rm R}$ denotes an inaccessible shield
  system that records the photon number information of Alice and Bob's
  signals. The set of possible tripartite states $\mathcal{S}$ contains all
  such states for which the bipartite (photon-number diagonal) states
  $\rho_{\rm AB} = \tr_{\rm ER}(\ket{\Phi}_{\rm ABER}\bra{\Phi})$ are
  compatible with the observations.  
  \end{enumerate}
\end{proposition}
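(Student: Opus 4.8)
\emph{Proof plan.} The plan is to exhibit, for every state in scenario~(1), a state in scenario~(2) with exactly the same observable statistics and the same secret key rate, and conversely, so that the two worst‑case (over Eve) key rates coincide. The only tool needed is the \emph{coherent} total‑photon‑number measurement on Alice's and Bob's systems, i.e.\ the isometry
\begin{equation}
  V=\sum_{n,m=0}^{\infty}\Pi_n\otimes\Pi_m\otimes\ket{n,m}_{\rm R},
\end{equation}
where $\Pi_n$ ($\Pi_m$) projects onto the $n$‑photon ($m$‑photon) subspace of the modes destined to Alice (Bob) and $\rm R$ is a fresh register. The whole argument rests on Eq.~\ref{mnw}: every POVM element $F^{\rm A}_i,F^{\rm B}_j$ — and hence every $\sqrt{F^{\rm A}_i},\sqrt{F^{\rm B}_j}$ — is block diagonal in the photon number and therefore commutes with each $\Pi_n\otimes\Pi_m$; together with $\sum_{n,m}\Pi_n\otimes\Pi_m=\mathbbm{1}_{\rm AB}$ this is what makes $V$ invisible to everything Alice, Bob and Eve can see or extract.

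\medskip\noindent\emph{From (1) to (2).} First I would take $\ket{\Psi}_{\rm ABE}\in\mathcal{P}$ and set $\ket{\Phi}_{\rm ABER}=(V\otimes\mathbbm{1}_{\rm E})\ket{\Psi}_{\rm ABE}$; expanding $V$ shows at once that $\ket{\Phi}$ has the form of Eq.~\ref{eq:shieldstate} with $\sqrt{p_{nm}}\,\ket{\phi_{nm}}_{\rm ABE}=(\Pi_n\otimes\Pi_m\otimes\mathbbm{1}_{\rm E})\ket{\Psi}$. Next I would check that this lands in $\mathcal{S}$: the bipartite marginal $\rho_{\rm AB}=\tr_{\rm ER}\ketbra{\Phi}{\Phi}$ is just the photon‑number dephasing of the original $\rho_{\rm AB}$, and since the detectors are photon‑number diagonal this dephasing leaves every $\tr(F^{\rm A}_iF^{\rm B}_j\rho_{\rm AB})$ unchanged (the dephasing contributes only a $\sum_{n,m}\Pi_n\otimes\Pi_m=\mathbbm{1}$ under the trace), so $\rho_{\rm AB}$ is still compatible with the data. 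The crucial step is then to verify that the full classical–classical–quantum state shared by Alice, Bob and Eve after their measurements — with $\rm R$ kept inaccessible — is \emph{literally} the same as in scenario~(1): using that $\sqrt{F^{\rm A}_i}\sqrt{F^{\rm B}_j}$ commutes with $V$, that $V$ is an isometry acting only on $\rm AB$, and again that the residual dephasing is absorbed by $\sum_{n,m}\Pi_n\otimes\Pi_m=\mathbbm{1}$ under the trace over $\rm AB$, one finds Alice's and Bob's joint outcome distribution and Eve's conditional states unchanged. Since every key‑rate functional (one‑way or two‑way) is a function of this state, the rates agree, so the worst‑case rate over $\mathcal{S}$ is at most the worst‑case rate over $\mathcal{P}$.

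\medskip\noindent\emph{From (2) to (1), and the hard part.} For the converse I would take any $\rho_{\rm ABE}\in\mathcal{S}$ with defining purification $\ket{\Phi}_{\rm ABER}$ and simply reinterpret $\ket{\Phi}$ as a scenario‑(1) state in which Eve holds $\rm ER$: its bipartite marginal $\rho_{\rm AB}$ is photon‑number diagonal and, by definition of $\mathcal{S}$, compatible with the data, so $\ket{\Phi}_{\rm AB(ER)}\in\mathcal{P}$; handing Eve the extra system $\rm R$ can only increase her distinguishing information (she may discard it) and does not touch Alice's and Bob's correlations, so this scenario‑(1) rate is no larger than the scenario‑(2) rate — giving the reverse inequality and hence equality. (So the apparently more restrictive set $\mathcal{S}$, which keeps only photon‑number‑diagonal $\rho_{\rm AB}$ but weakens Eve by withholding $\rm R$, trades these two effects off exactly.) The step I expect to be the main obstacle is the ``state is literally unchanged'' claim in the forward direction: a generic $\ket{\Psi}\in\mathcal{P}$ carries coherences between different total‑photon‑number sectors — precisely the degrees of freedom the photon‑number‑diagonal data leaves unconstrained — and one must see cleanly that $V$ dephases these away without altering anything measurable; the resolution is exactly the identity $\sum_{n,m}\Pi_n\otimes\Pi_m=\mathbbm{1}$ together with Eq.~\ref{mnw}. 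A smaller point to state carefully is the monotonicity ``more side information cannot raise the key rate'', which holds for any reasonable key functional by data processing and is what makes the statement independent of whether the post‑processing is unidirectional or bidirectional.
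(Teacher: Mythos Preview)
Your proposal is correct and follows essentially the same route as the paper: the forward direction is exactly the paper's ``coherent photon number measurement'' (your isometry $V$), and the key computation that the ccq state is unchanged is precisely the paper's Eq.~\ref{eq:situation1}. For the converse the paper simply writes ``holds trivially since $\mathcal{S}\subset\mathcal{P}$''; your version --- embed $\ket{\Phi}_{\rm ABER}$ in $\mathcal{P}$ with Eve holding $\rm ER$ and invoke monotonicity of the key rate under extra side information --- is a correct and somewhat more explicit unpacking of that one-liner, and indeed the paper also remarks that the proposition holds independently of uni- versus bidirectional post-processing, which is exactly the generality your data-processing step buys.
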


\begin{proof}

We show that for any state $\ket{\Psi}_{\rm ABE}\in\mathcal{P}$ chosen from
the first scenario, there is a particular three party state $\rho_{ABE} \in
\mathcal{S}$ from the second case such that Eve's position, once Alice and Bob
have performed their measurements, is completely equivalent. The reverse
direction of this statement holds trivially since $\mathcal{S} \subset
\mathcal{P}$. Note that Eve's eavesdropping capabilities are
completely determined by the collection of conditional states
$\rho_{\rm E}^{ij}$ and their corresponding probabilities $p_{ij}$,
both defined via the relation   
\begin{equation}
  \label{eq:Evesituation}
  p_{ij} \rho_{\rm E}^{ij} = \tr_{\rm AB} \left( F^{\rm A}_i \otimes F^{\rm
  B}_j \sigma_{\rm ABE} \right), 
\end{equation}
when Alice, Bob, and Eve 
share a state $\sigma_{\rm ABE}$. Let us start with the first case,
where $\sigma_{\rm ABE}=\ket{\Psi}_{ \rm ABE}\bra{\Psi}$ with
$\ket{\Psi}_{ \rm ABE}\in\mathcal{P}$. Using Eq.~\ref{eq:Evesituation}
and Eq.~\ref{mnw} we arrive at  
\begin{eqnarray}
  \nonumber
  p_{ij} \rho_{\rm E}^{ij} &=& \sum_{n,m} \tr_{\rm AB} \left( \Pi^{\rm A}_n
  F^{\rm A}_i \Pi^{\rm A}_n \otimes \Pi^{\rm B}_m F^{\rm B}_j \Pi^{\rm B}_m \;
  \ket{\Psi}_{\rm ABE}\bra{\Psi} \right) \\  
  \label{eq:situation1}
  &=& \sum_{n,m} p_{nm} \tr_{\rm AB} \left( F^{\rm A}_i \otimes F^{\rm B}_j
  \ket{\phi_{nm}}_{\rm ABE}\bra{\phi_{nm}} \right). 
\end{eqnarray}
In the second line we define $\Pi^{\rm A}_n \otimes \Pi^{\rm B}_m
\ket{\Psi}_{\rm ABE} = \sqrt{p_{nm}} \ket{\phi_{nm}}_{\rm ABE}$. To
compare it with the second scenario we select $\ket{\Phi}_{\rm ABER}$
arising from the state $\ket{\Psi}_{\rm ABE}$ via a coherent photon
number measurement. Its outcome is stored in the additional register
system ${\rm R}$ and the state is given by 
\begin{equation}
  \ket{\Phi}_{\rm ABER} = \sum_{n,m} \Pi^{\rm A}_n \otimes \Pi^{\rm B}_m
  \ket{\Psi}_{\rm ABE} \ket{0}_{\rm R} = \sum_{n,m} \sqrt{p_{nm}}
  \ket{\phi_{nm}}_{\rm ABE} \ket{n,m}_{\rm R}. 
\end{equation}
Using $\sigma_{\rm ABE} = \tr_{\rm R} \left( \ket{\Phi}_{\rm ABER}
\bra{\Phi}\right)$ in Eq.~\ref{eq:Evesituation} directly delivers the same
result as Eq.~\ref{eq:situation1}. This finally proves the proposition.   
\end{proof}

Next we focus on a security proof that only requires direct classical
communication in the reconciliation part of the protocol. More
precisely, we apply the secret key rate formula derived in the recent
security proof presented in Refs.~\cite{kraus05a,renner05a,kraus07a}. It
relies on Alice, Bob, and Eve sharing signal states of the form given
by Eq.~\ref{eq:shieldstate}. Once the legitimate users have measured
their part of $\rho_{\rm ABE}=\tr_{\rm R}(\ket{\Phi}_{\rm ABER}\bra{\Phi})$, they
only have access to their classical outcomes which are stored in
registers $\rm X$ and $\rm Y$ respectively. On the contrary, Eve still
has at her disposal a quantum state. This scenario is described by the
so-called ccq state $\rho_{\rm XYE}=\mathcal{M}(\rho_{\rm ABE})$ that
results from the map   
\begin{equation}
  \label{eq:ccq}
  \rho_{\rm ABE}  \mapsto \rho_{\rm{XYE}}=\mathcal{M}(\rho_{\rm ABE})=
  \sum_{i,j} p_{ij} \ket{i,j}_{\rm XY}\bra{i,j} \otimes \rho_{\rm E}^{ij},
\end{equation}
where the probabilities $p_{ij}$ and the conditional states
$\rho_E^{i,j}$ are defined like in Eq.~\ref{eq:Evesituation} by
setting $\sigma_{\rm ABE}=\rho_{\rm ABE}$. According to
Refs.~\cite{kraus07a}, the secret key rate,
that we shall denote as $R$, satisfies 
\begin{equation}
  \label{eq:rate1}
  R \geq \inf_{\rho_{\rm ABE} \in \mathcal{S}} \sum_{n,m=0}^{\infty} g_{nm}
  S(X \vert E,n,m)- g H(X \vert Y).  
\end{equation}
The infimum runs over all possible tripartite states $\rho_{\rm ABE}$ that 
belong to the class $\mathcal{S}$ defined in Proposition \ref{prop}.
Here $H(X \vert Y)$ stands for the conditional Shannon entropy of Alice's
random variable $X$ conditioned on Bob's random variable $Y$. This 
part accounts for the error correction step of the protocol
and it is independent of the chosen tripartite state $\rho_{\rm ABE} \in
\mathcal{S}$. In order to compute the conditional von Neumann
entropies $S(X \vert E,n,m)$ describing Eve's information about
Alice's raw key, we first calculate the ccq states for the definite
photon number states $\ket{\phi_{nm}}_{\rm  ABE}$ as they appear in
the decomposition $\rho_{\rm ABE}=\sum_{n,m} p_{nm}
\ket{\phi_{nm}}_{\rm ABE}\bra{\phi_{nm}}\in\mathcal{S}$. Let us denote
these conditional ccq states as $\rho_{\rm XYE}^{nm} =
\mathcal{M}(\ket{\phi_{nm}}_{\rm ABE} \bra{\phi_{nm}})$. From the
definition of the conditional entropy we obtain $S(X \vert E, n,m) =
S( \rho_{\rm XE}^{nm}) - S( \rho_{\rm E}^{nm})$, where $S(\rho)$
denotes the von Neumann entropy of a generic quantum state $\rho$. The
remaining parameters that appear in Eq.~\ref{eq:rate1} are the overall
gain $g$ and the individual gains $g_{nm}$, \ie, the probability that
Alice and Bob obtain an overall conclusive result when $n$ and $m$
photons are detected on each side respectively. These parameters can
be written as    
\begin{eqnarray}
  g_{nm} &=& p_{nm} Y_{nm}, \\ 
  g &=&\sum_{n,m} g_{nm},
\end{eqnarray}
with the conditional yields $Y_{nm}$ defined as the probability that both
parties obtain a conclusive outcome conditioned on the fact that they
received a state $\tr_{\rm E}(\ket{\phi_{nm}}_{\rm ABE}\bra{\phi_{nm}})$.

The detector decoy idea does not imply any change in the underlying
security proof. In fact, one could even improve the 
lower bound on the secret key rate
formula given in Eq.~\ref{eq:rate1} by including local randomization
steps \cite{kraus05a,renner05a,renes07a,smith08a} or by allowing
several rounds of classical bidirectional communication
\cite{kraus07a}. The main advantage of the detector decoy method is
that it allows Alice and Bob to acquire more information about the
class $\mathcal{S}$ over which they have to perform the
optimization. As explained in the next subsection, one can in
principle obtain the \emph{full statistics} that PNR detectors could
give. Note, however, that when Alice and Bob use PNR detectors they
also have single shot resolution. Still, to have access to the
statistics of the arriving signals allows the legitimate users to gain
more knowledge about Eve's information on the raw key. Thus a smaller
amount of privacy amplification is needed, and consequently one
obtains more secret key.  

One can further simplify the lower bound on the secret key rate formula
such that only a few parameters need to be estimated,
cf. Ref.~\cite{kraus07a}. The conditional entropies satisfy $H(X
\vert n,m) \geq S( X \vert E,n,m) \geq 0$ for all photon numbers $n$
and $m$. This means that the secret key rate $R$ can always be lower
bounded by restricting the sum in Eq.~\ref{eq:rate1} to any of its
items. For instance, one can select the single photon and the vacuum
contributions only, and obtains   
\begin{eqnarray}
  \nonumber
  R \geq \inf_{\rho_{\rm ABE}\in\mathcal{S}} \sum_{m=0}^\infty
     && g_{0m}  S(X \vert n=0,m) + g_{11} S(X\vert E,n=1,m=1) \\
  &&
  \label{eq:rate_2}
  -g H(X \vert Y).   
\end{eqnarray}

So far we have not considered any explicit QKD scheme yet. In the
following we restrict ourselves to the BB84 and the 6-state protocols,
since they allow us to express both entropies by means of quantities
that are directly observable. Moreover, and for simplicity, let us
assume that the sifted key is only composed by those events for which
Alice and Bob have used their normal detection device, \ie, all
possible decoy outcomes are considered as inconclusive results and
they are only used to estimate the class $\mathcal{S}$. Similarly, all
the no click outcomes and all detection events where Alice and Bob
employed different basis choices are discarded as well. In the case of
double clicks two options are possible: Either they are discarded as
well, or one assigns at random one of the two conclusive outcomes
``0'' or ``1'' \cite{nor_99}. As a result, Alice and Bob are left with
binary values whenever they consider an outcome pair as conclusive. As
shown in Ref.~\cite{kraus05a,renner05a}, both parties can randomly flip their
bit values together, which results in an overall symmetric error rate
that gives $H(X \vert Y)=h_2(Q)$, where $h_2$ denotes the binary
entropy. Any conclusive result on Alice's side that originates from a
vacuum input contains no information for the eavesdropper \cite{lo05b}. If we
assume that these events are  completely unbiased we obtain $S(X \vert
E,n=0,m)=1$ for all $m$. This means that Alice and Bob do not need to
perform any privacy amplification on all these outcomes, but note 
as well that they do not provide any key information because of the
error correction part in the formula. The total \emph{vacuum gain} is given by
$g_{0}=\sum_{m} g_{0m}$. The conditional von Neumann entropy from the
single photon contribution can always be lower bounded by the
completely symmetric case, which gives $S(X \vert E, n=1,m=1) \geq
f(Q_{11})$, where $f$ denotes a convex function that depends on the
chosen protocol, and $Q_{11}$ represents the conditional single photon
QBER that one observes with \emph{perfect} detectors. For the two
considered protocols this function $f$ takes the form 
\footnote{Let us mention two important points here. Because of convexity of
  both functions one could alternatively use the actual, single photon
  QBER as an argument of the lower bound functions. This situation
  corresponds to the case in which one assumes the uncalibrated device
  scenario for the evaluation of the privacy amplification
  part. However, if one takes into account any imperfections
  from the actual detection device, then one could even enhance the
  actual lower bound $f$. For example, if one considers a dark count
  model that randomly flips the bit value on Alice side (dark counts
  produce double clicks which are randomly assigned afterwards) this
  actually reduces Eve's information on the raw key and hence the
  privacy amplification part \cite{kraus05a,renner05a}. Nevertheless we shall
  ignore this effect in our discussion.}   
\begin{equation}
  \label{eq:privacy}
  f(x)= \left\{ \begin{array}{ll} 1-h_2(x) & \textrm{BB84,} \\
      1+h_2(x)-h_2(\frac{3x}{2})-\frac{3x}{2} \log_2(3) &
      \textrm{6-state.} \end{array} 
  \right.  
\end{equation}
Let $g_{11}^{\rm min}$ and $g_{0}^{\rm min}$ denote lower bounds on the
single photon and vacuum gain respectively, while $Q_{11}^{\rm
  max}$ represents an upper bound to the maximal attainable value of
the single photon QBER, of all states compatible with the class
$\mathcal{S}$. With this notation, the secret key rate satisfies   
\begin{equation}\label{rate_mc}
  R \geq  g^{\rm min}_{0}+g^{\rm min}_{11} f(Q^{\rm max}_{11}) -   g h_2(Q),
\end{equation}
with the distinction between the BB84 and the 6-state protocol being
only in the function $f$ given by Eq.~\ref{eq:privacy}. Note that the gains
$g_{11}, g_0$ and $g$ depend on the choice of which outcomes are
considered as conclusive. This decision includes as well the overall sifting
effect. Let $q$ denote the probability that both parties use their
normal detection device and they measure in the same basis.  Then, using an
asymmetric basis choice in the setup in combination with a very
rarely switching to the decoy measurement, this overall sifting factor
$q$ can be made arbitrary close to unity \cite{efficientBB84} and thus
we can drop it in the evaluation section.

\subsection{Detector decoy estimation}\label{decoy_est}

In this section we apply the detector decoy method to the active
measurement setup as it is used in the usual BB84 or the 6-state
protocol, and we show how Alice and Bob can estimate the
essential quantities to evaluate Eq.~\ref{rate_mc}. That is, the vacuum gain
$g_0$, the single photon gain $g_{11}$ and the conditional quantum bit
error rate $Q_{11}$ from perfect detectors.  

The discussion starts with the typical model of an imperfect
threshold detector which shows some noise in the form of dark counts
as given by Eq.~\ref{eq:inefficient} with $\eta=1$. Furthermore we
require that Alice's and Bob's detectors have equal (and constant) detector
inefficiency; otherwise this opens the possibility for powerful new
eavesdropping attacks \cite{makarov06a,qi07a} and other techniques
have to be applied \cite{fung08a,lydersen08a}. Under this assumption it is a
common technique to include the inefficiency of the detectors into the
action of the quantum channel, and one performs the analysis with a
threshold detector model of perfect efficiency. If one
can prove security without knowing the exact detector efficiency,
then one automatically also shows security with this particular extra
knowledge. Suppose that both threshold detectors on each side have
equal dark count probabilities. The POVM elements for the active
measurement choice $\beta$ are given by \cite{nor_99} 
\begin{eqnarray}
  \label{eq:F0}
  F_{\rm 0}&=&(1-\epsilon)  \sum_{n= 1}^\infty
  \ket{n,0}_\beta\bra{n,0} +   \epsilon (1-\epsilon)
  \ket{\rm{0,0}}\bra{\rm{0,0}}, \\ 
  \label{eq:F1}
  F_{\rm 1}&=& (1-\epsilon) \sum_{n=1}^\infty \ket{0,n}_\beta\bra{0,n}
  + \epsilon (1-\epsilon) \ket{\rm{0,0}}\bra{\rm{0,0}}, 
\end{eqnarray}
together with $F_{\rm vac}=(1-\epsilon)^2 \ket{\rm{0,0}}\bra{\rm{0,0}}$ and 
$F_{\rm D}=\mathbbm{1} - F_{\rm vac}- F_{\rm 0}- F_{\rm 1}$, with
$F_{\rm D}$ denoting the operator associated with double click events.  
Here $\ket{n,0}_\beta$ and $\ket{0,n}_\beta$ refer to the
corresponding two-mode Fock state in the chosen polarization basis
$\beta$. Although we restrict ourselves to this particular model, the
analysis that follows can also be straightforwardly adapted for the
calibrated device scenario.     

Let us begin by analyzing the single photon gain $g_{11}$. Consider a 
simple setup where Alice and Bob insert only a single beam splitter in
front of their measurement devices. This beam splitter is of course
not assigned to the quantum channel. This scenario is illustrated in
Fig~\ref{fig:simpledecoy}, where the transmittance of Alice and Bob's
beam splitter is denoted as $\eta_{\rm A}$ and $\eta_{\rm B}$ respectively.   
\begin{figure}[ht]
  \centering
  \resizebox{100mm}{!}{\includegraphics{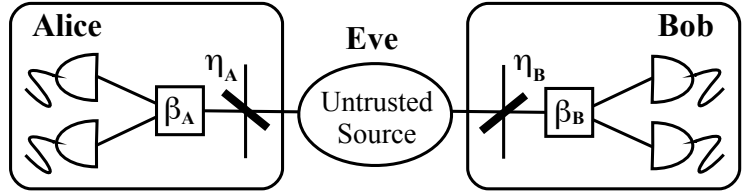}}
  \caption{Schematic diagram of the first detector decoy setup
    considered. Alice and Bob place a single beam splitter, with
    transmittance $\eta_A$ and $\eta_B$ respectively, in front of
    their detection device.} 
  \label{fig:simpledecoy}
\end{figure}
This setup has the advantage that both legitimate users only need to
collect the count rates for one variable beam splitter per site, but
it still enables Alice and Bob to obtain the overall photon number
distribution $p_{nm}$ of the incoming signals. With this information
they can directly compute the individual gain of the single photon
contribution $g_{11}$ for the considered scenarios. For this detection
device the overall ``no click'' operator on Alice's side becomes 
\begin{equation} 
 F^{\rm A}_{\rm vac}(\eta_{\rm A})= (1-\epsilon)^2 \sum_{n=0}^\infty
 (1-\eta_{\rm A})^n \Pi^{\rm A}_n, 
\end{equation} 
where the projector onto the $n$-photon subspace is given by
\begin{equation}
  \Pi^{\rm A}_n = \sum_{k=0}^n \ket{k,n-k}_\beta \bra{k,n-k}.
\end{equation}
Note that this operator is independent of the chosen measurement basis
$\vec \beta$, and hence we omit this label in the following whenever
it is redundant. A similar expression holds for Bob's measurement operator
$F^{\rm B}_{\rm vac}(\eta_{\rm B})$. Suppose that both parties receive
now the generic input state $\rho_{\rm AB}$, then they observe a no
click outcome with probability  
\begin{equation}
  \label{eq:decoy1simple}
  p^{\rm AB}_{\rm vac}(\eta_{\rm A},\eta_{\rm B}) = (1-\epsilon)^4
  \sum_{n,m=0}^\infty (1- \eta_{\rm A})^n (1 - \eta_{\rm B})^m p_{nm},
\end{equation}
where the photon number distribution $p_{nm}$ is given by
$p_{nm}=\tr_{\rm AB}( \Pi_n^{\rm  A} \otimes \Pi_m^{\rm B} \rho_{\rm
  AB})$. Now, if Alice and Bob vary the transmittance of their
inserted beam splitters, they can generate a whole set of linear
equations similar to those given by Eq.~\ref{eq:decoy1simple}, in
which the photon number distribution $p_{nm}$ appears as the open
parameter. With this set of equations the whole distribution becomes
accessible to Alice and Bob, however if they are only interested in
the single photon probability $p_{11}$ then the three different
settings of Proposition~\ref{prop_n} are already enough. Next, let us
compute the single photon gain $g_{11}$ for two different
scenarios. Whenever Alice and Bob randomly assign bit values to their
double click outcomes any single photon state will necessarily produce
a conclusive outcome. In the second case, we consider that only single
clicks contribute to the raw key rate. Here the individual gain is
slightly lower than in the first situation since a single photon can
trigger a double click event because of dark counts. The two different
individual gains are given, respectively, by   
\begin{eqnarray}
  g_{11, \rm d}&=&p_{11}, \\
  g_{11, \rm s}&=&(1-\epsilon)^2 p_{11},
\end{eqnarray}
where the subscripts ``d'' (with double clicks) and ``s'' (single
clicks only) label the two different cases. Let us mention that the
idea of obtaining the impinging photon number statistics with only one
variable beam splitter can as well be applied to other
photon number preserving linear networks, since the probability to
obtain an overall no click outcome in the all the threshold detectors after
such a network can always be calculated by replacing the whole network
by only one such threshold detector.  

The vacuum gain $g_0$ represents a direct observable quantity even
without the detector decoy method. It only relies on the fact that one
can obtain the statistics of a perfect threshold detector from the
observed data of a detector which has dark counts \cite{moroder06a}. 
Suppose that Bob considers a specific measurement outcome $k$ which he
can perfectly distinguish with his measurement device, and the
corresponding POVM element is denoted by $F_k^{\rm B}$. Then, the
probability that Alice registers no click at all while Bob sees this
specific outcome is given by   
\begin{equation}
  \label{eq:p0kdark}
  p_{{\rm vac},k}^{\rm AB}= (1-\epsilon)^2 \tr_{\rm AB}
  \left(\ket{\rm 0,0}_{\rm A}\!\bra{\rm 0,0} \otimes F_k^{\rm B} \rho_{\rm
      AB}\right)=(1-\epsilon)^2 p_{0\rm k}.  
\end{equation}
Since both parties can have access to the dark count probability
$\epsilon$ they can directly use Eq.~\ref{eq:p0kdark} to compute the
probability $p_{0,\rm k}$ from perfect threshold detectors. Using this
value directly allows to infer the vacuum gain for the two different
scenarios as  
\begin{eqnarray}
  g_{0, \rm d}&=&[1-(1-\varepsilon)^2] p_{0,\rm d}, \\
  g_{0, \rm s}&=&2 \epsilon(1-\epsilon) p_{0,\rm s}.
\end{eqnarray}

The resolved single photon QBER $Q_{11}$ is \emph{inaccessible}
with the simple detector decoy setup presented above. 
We can consider a more complicated setup in which a variable beam
splitter is place in front of each threshold detector. This scenario
is depicted in Fig.~\ref{fig:harddecoy}. Now Alice and Bob can adjust the
transmittance of their two beam splitters $\vec \eta_{\rm
  A}=(\eta_{{\rm A},1},\eta_{{\rm A},2})$ and $\vec \eta_{\rm
  B}=(\eta_{{\rm B},1}, \eta_{{\rm B},2})$ respectively. Although,
from a practical point of view, this scenario is less attractive than
the previous one---it requires a more complicated statistical
analysis---it is interesting on a conceptual level since it can
provide Alice and Bob with the same statistics like PNR
detectors. Obviously all the results from the simple setup apply
if one selects $\eta_{{\rm A},1}=\eta_{{\rm A},2}=\eta_{\rm A}$ and
similar for Bob's side.   
\begin{figure}[ht]
  \centering
  \resizebox{100mm}{!}{\includegraphics{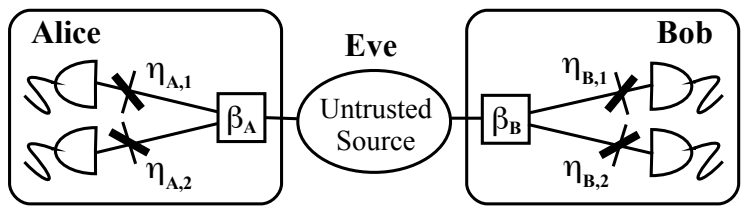}}
  \caption{Schematic diagram of the second scenario analyzed, where
    Alice and Bob place a variable beam splitter in front of every
    threshold detector.} 
  \label{fig:harddecoy}
\end{figure}
Now the POVM element for the overall no click outcome on Alice's side
is given by   
\begin{equation}
  F^{\rm A}_{\rm vac,\beta_A} (\vec \eta_{\rm A})= (1-\epsilon)^2
  \sum_{k,l=0}^\infty  \bar \eta_{\rm A,1}^k \bar \eta_{\rm A,2}^m
  \ket{k,l}_{\beta_A}\!\bra{k,l}, 
\end{equation}
where we use the abbreviation $\bar \eta = 1- \eta$. A similar
expression can be obtained
for $F^{\rm B}_{\rm vac, \beta_B} (\vec \eta_{{\rm B}})$. In
contrast to the first measurement device analyzed, now the no click
outcome depends on the chosen polarization basis $\vec
\beta=(\beta_{\rm A}, \beta_{\rm B})$. The probability of the combined
``no click'' outcome in Alice's and Bob's side can be written as  
\begin{equation}
  \label{eq:decoy2hard}
  p_{{\rm vac},{\vec \beta}}^{\rm AB}(\vec \eta_{\rm A}, \vec
  \eta_{\rm B})= (1-\epsilon)^4  \sum_{k,l,r,s=0}^{\infty}
  \bar\eta_{{\rm A},1}^k \bar\eta_{{\rm A},2}^l \bar\eta_{{\rm B},1}^r
  \bar\eta_{{\rm B},2}^s \;q_{\vec \beta}(k,l;r,s), 
\end{equation}
where the probabilities $q_{\vec \beta}(k,l;r,s)$ have the form
\begin{equation}
  q_{\vec \beta}(k,l;r,s) = \tr_{\rm AB} \left( \ket{k,l}_{\beta_{\rm
        A}}\bra{k,l} \otimes \ket{r,s}_{\beta_{\rm B}}\bra{r,s} \;
    \rho_{\rm AB}\right).  
\end{equation}
These probabilities coincide with the ones provided by PNR
detectors. Using again the idea of different settings for the
transmittance of the adjustable beam splitters one can generate more
linear equations of the form given by Eq.~\ref{eq:decoy2hard}. 
Consequently, the photon number resolved statistics $q_{\vec \beta}(k,l,r,s)$ 
become accessible to Alice and Bob. With this resolved distribution at
hand it is straightforward to compute the single photon QBER $Q_{\vec \beta,11}$
that Alice and Bob would observe using perfect detectors. It is
determined by    
\begin{equation}
  p_{11} Q_{\vec \beta,11}=q_{\vec \beta}(1,0;0,1)+q_{\vec
    \beta}(0,1;1,0).
\end{equation}
Note that one has to use the symmetrized single photon QBER in the lower
bound formula given by Eq.~\ref{rate_mc}.

\subsection{Evaluation}\label{evalua}

In this part we evaluate the lower bound on the secret key rate for
the different decoy detection schemes presented in the last
subsection. Additionally, we compare it with a security proof that
relies on the validity of the squash model; for a different comparison
between the squash model and an alternative estimation procedure not
based in this last paradigm see also Ref.~\cite{koashinew}. We assume
that all relevant parameters that appear in the lower bound formula
can be estimated precisely, \ie, we ignore any statistical effect of
an estimation procedure that uses only a few number of decoy settings. 

Suppose that the observed data originate from a pumped type-II down conversion
source. The states emitted by this type of source can be written as
\cite{PDC_source} 
\begin{equation}
  \ket{\Psi_{\rm source}}_{\rm AB}=\sum_{n=0}^{\infty} \sqrt{p_n}
    \ket{\Phi_n}_{\rm AB},  
\end{equation}
where the probability distribution $p_n$ is given by
\begin{equation}
  p_n=\frac{(n+1)\lambda^n}{(1+\lambda)^{n+2}}.
\end{equation}
The parameter $\lambda$ is related with the pump amplitude of the
laser and determines the mean photon pair number per pulse
as $\mu=2 \lambda$. Each signal state $\ket{\Phi_n}_{\rm AB}$
contains exactly $2n$ photons; $n$ of them travel to
Alice and the other $n$ to Bob. These states are of the form
\begin{equation}\label{mar}
  \ket{\Phi_n}_{\rm AB}=\sum_{m=0}^n \frac{(-1)^m}{\sqrt{n+1}}
  \ket{n-m,m}_{\rm A}\ket{m,n-m}_{\rm B},  
\end{equation}
where we have used the standard basis on each
side, {\it i.e.}, $\beta_{\rm A}=\beta_{\rm B}=z$. When 
$n=1$, the signal state in Eq.~\ref{mar}
becomes the EPR state, which admits perfect anticorrelations in all
directions\footnote{If Alice and Bob employ the measurement devices
  from Eqs.~\ref{eq:F0},~\ref{eq:F1}, then they would always observe
  anticorrelated outcomes. Hence, one of the parties has to
  interchange the observed data ``0'' $\leftrightarrow$ ``1''.}. 
When $n \geq 2$, the states $\ket{\Phi_n}_{\rm AB}$ represent
$W$-states. That is, even if Alice and Bob measure them along the same
direction they might observe double clicks. In fact, the biggest
contribution in the observed QBER stems from the multiphoton
pairs. For instance, if the signal $\ket{\Phi_{n=2}}$ loses only one
photon in the channel, then the error rate of the resulting state
(although still entangled) is
already about\footnote{Note, however, that there are different, more
  complicated measurement techniques that can be more robust against
  particle loss from a PDC source \cite{norbertcomment}.}
$16.6\%$. This QBER is above the threshold error rate allowed by the
one-way security proof employed in the previous section, even assuming
a single qubit realization. This means, in particular, that the
expected average mean photon number $\lambda$ which optimizes the
secret key rate in the long distance limit is quite low, and one does
not expect a security proof which enables to drive the source with a
much higher mean photon number.    

To generate the observed data of an experiment that uses this kind of
source we employ the following procedure: Since the loss is the
predominant factor in the error rate and in the overall gain, we
assume that the state emitted by the source passes first through a
lossy, but otherwise error-free channel. Such a channel is
characterized by the loss coefficient $\alpha$ and the total distance
$l$. We include as well in the channel the effect of the detector
efficiency $\eta_{\rm det}$ of the measurement device. 
Hence, the overall transmission in the optical line towards Alice becomes  
\begin{equation}
  \eta_{\rm A}= \eta_{\rm det} 10^{-\frac{\alpha l}{10}} =
  10^{-\frac{\db_{\rm A}}{10}},
\end{equation}
and determines the overall loss coefficient $\db_{\rm A}$. A similar
relation holds also for the channel towards Bob. The total loss
between both parties is characterized by $\db_{\rm tot}=\db_{\rm A}+\db_{\rm B}$.
After the lossy channel, we include the effect of the misalignment and
the dark counts of the detectors in the observed data. The
misalignment varies slightly the polarization of the incoming light
field. This effect changes over time and it is assumed to be
uncontrollable. Averaging it results in an action similar to that of a
depolarizing channel. Specifically, we consider the following misalignment
model: Every time a single photon arrives at the detection device it
triggers the correct detector with probability $(1-e)$, while with
probability $e$ it changes its polarization and triggers the wrong
detector. When more photons enter the detection apparatus this effect
is assumed to occur independently for every single photon and hence it can
also change the probability to observe a double click. 
To conclude, we assume as well that Alice's and Bob's detectors suffer
from dark counts as described in the previous subsection. Dark counts
are typically the crucial parameter that limit the distance of a QKD scheme. 


Next, we compare the different lower bounds on the secret key rate for
the various scenarios considered, which again are distinguished by
the subscripts ``s'' and ``d'' depending on the double click
choice. In the simple detector decoy setup the resolved
error rate is not directly accessible. Still, one can upper bound it
via a worst case assumption. That is, we consider that \emph{all}
errors originate from the single photons only. In the single click
case this upper bound, denoted as $\overline Q_{11,\rm s}$, is 
given by\footnote{From this estimation one could even try to calculate
  out the dark count probability of the detectors. However, we shall
  ignore this effect here.}    
\begin{equation}
   \overline Q_{11,\rm s} = \frac{1}{g_{11,s}} \left( Q_{\rm s} g_{\rm
       s} - g_0 \frac{1}{2}\right) \geq Q_{11}. 
\end{equation}
A similar relation gives the upper bound $\overline Q_{11,\rm d}$ for
the double click case. On the other hand, in the detector decoy scheme
which has two variable beam splitters in each side the conditional
quantum bit error rate $Q_{11}$ is equal to the hypothetical QBER
arising with perfect detectors, independently of the chosen
scenario. The different lower bounds are illustrated in
Tab.~\ref{tab:rates}.    
\begin{table}[ht]
  \centering
  \begin{tabular}{cc}
    \hline\hline
    Scenario \T \B & Lower bound \\
    \hline\hline
    Updated Squash \T \B & $\tilde g_0 + \tilde g_{11}[1-h_2(\tilde
    Q_{11})] - g_{\rm d} h_2(Q_{\rm d})$  \\
    \hline
    Double \T \B & $g_{0,\rm d} + g_{11,\rm d} f(Q_{11}) - g_{\rm d}
    h_2(Q_{\rm d})$ \\    
    Double + Bound \T \B & $g_{0,\rm d} + g_{11,\rm d} f(\overline
    Q_{11,\rm d}) - g_{\rm d} h_2(Q_{\rm d})$ \\    
    \hline
    Single \T \B & $g_{0,\rm s} + g_{11,\rm s} f(Q_{11}) - g_{\rm s}
    h_2(Q_{\rm s})$ \\    
    Single + Bound \T \B & $g_{0,\rm s} + g_{11,\rm s} f(\overline
    Q_{11,\rm s}) - g_{\rm s} h_2(Q_{\rm s})$ \\   
    \hline 
    PNR \T \B & $p_{11} [ f(Q_{11}) - h_2(Q_{11})]$ \\
    \hline \hline
  \end{tabular}
  \caption{Different lower bounds on the secret key rate for the
    various scenarios considered with active basis choice measurements.} 
  \label{tab:rates}
\end{table}
This table also includes the case where both parties employ
\emph{perfect} PNR detectors. This type of detectors allows them to
condition the error correction on the photon number observed. This
way, the term $H(X\vert Y)$ which appears in Eq.~\ref{eq:rate1} can
be changed by the conditional term $\sum g_{nm} H(X \vert Y,n,m)$. The
lower bound contained in Tab.~\ref{tab:rates} corresponds to the case where
single click events are the only conclusive outcomes.  
A fair comparison with a security proof based on the squash model is
only possible if one employs the result from Refs.~\cite{squash1,squash2}
to further extend the validity of the squash model to the situation where 
Alice's and Bob's detectors have dark counts; otherwise one loses the
vacuum gain. The exact target measurements are given by
Eqs.~\ref{eq:F0},~\ref{eq:F1} with $n=1$ and where every double click
is randomly assigned to one bit value. As a result, we obtain the
lower bound formula given in Tab.~\ref{tab:rates} (``Updated squash'')
in which $\tilde g_0$ and $\tilde g_{11}$ denote, respectively, the
squashed vacuum gain and the corresponding single photon gain, while
$\tilde Q_{11}$ stands for the conditional QBER on the squashed single
photons. Since the squash model does not exist for the active 6-state
protocol\footnote{All formulae in Tab.~\ref{tab:rates}
    are for active basis choice measurements. In particular in a
    passive basis choice selection the results of the squash model
    change. One could at least discard double or multiclick events
    between different basis outcomes, and moreover a squash model
    exists for the passive 6-state protocol.} 
\cite{squash2}, here one cannot choose the function $f$. 

For the simulation we consider an asymmetric distance scenario, since
\begin{figure}[h!h!t!]
  \centering
  \includegraphics[angle=-90,scale=0.40]{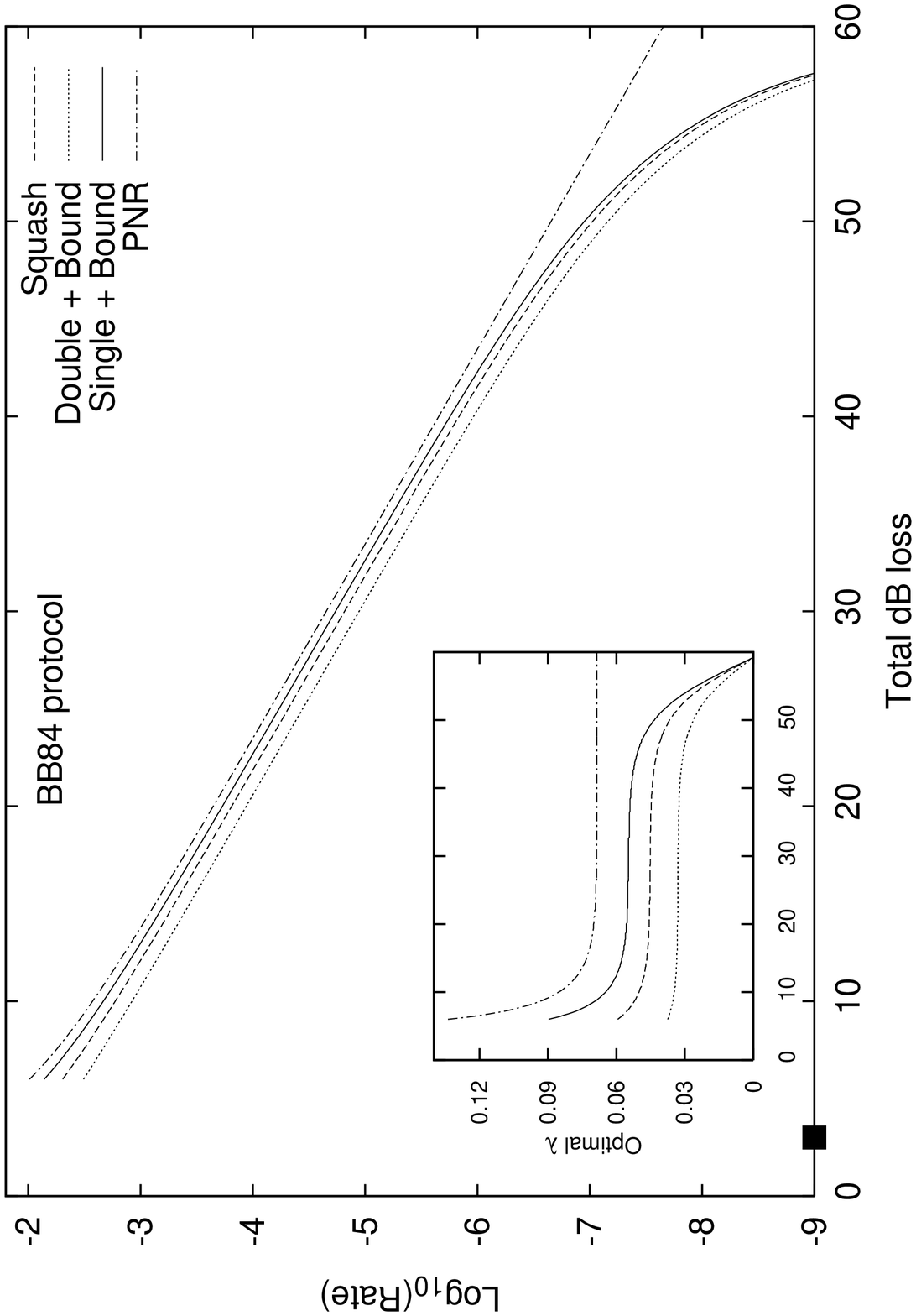}
  \includegraphics[angle=-90,scale=0.40]{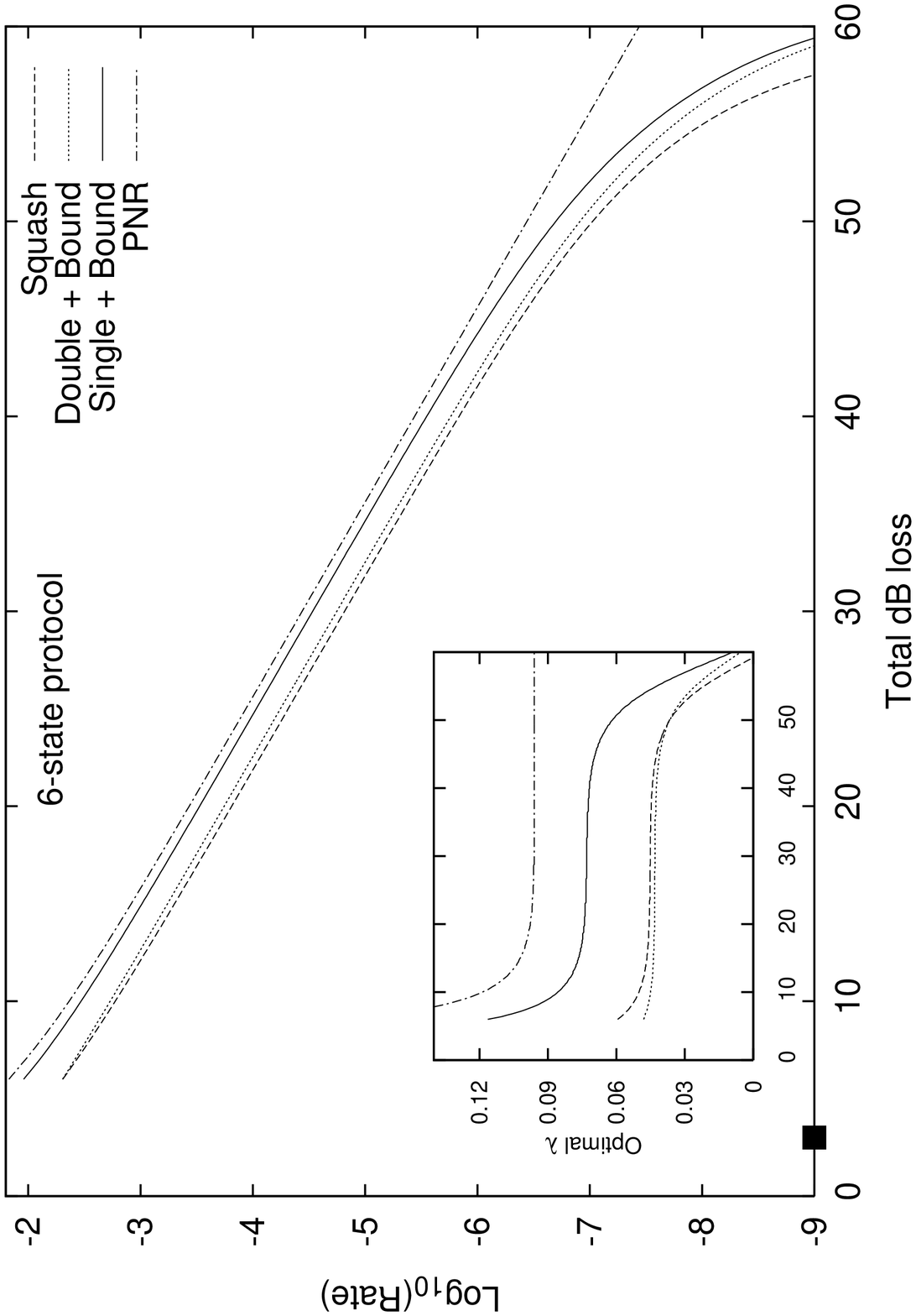}
  \caption{Different lower bounds on the secret key rate for the
    simple decoy detection setup shown in Fig.~\ref{fig:simpledecoy}
    with $\epsilon=10^{-6}$ and $e=0.03$. The inset figure shows the
    value for the optimized parameter $\lambda$ of the source.} 
  \label{fig:ratebb85}
\end{figure}
\begin{figure}[h!h!t]
  \centering
  \includegraphics[angle=-90,scale=0.40]{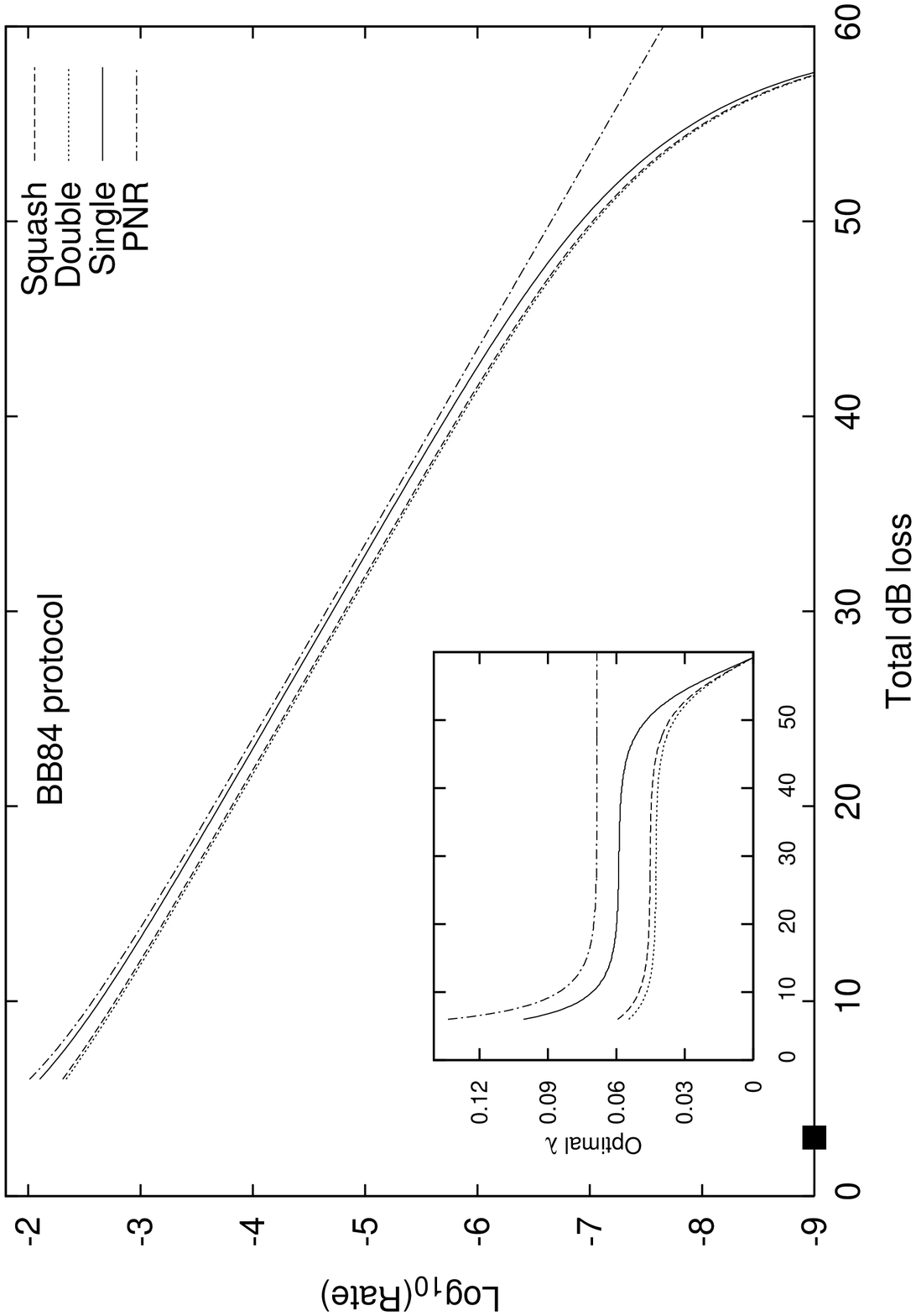}
  \includegraphics[angle=-90,scale=0.40]{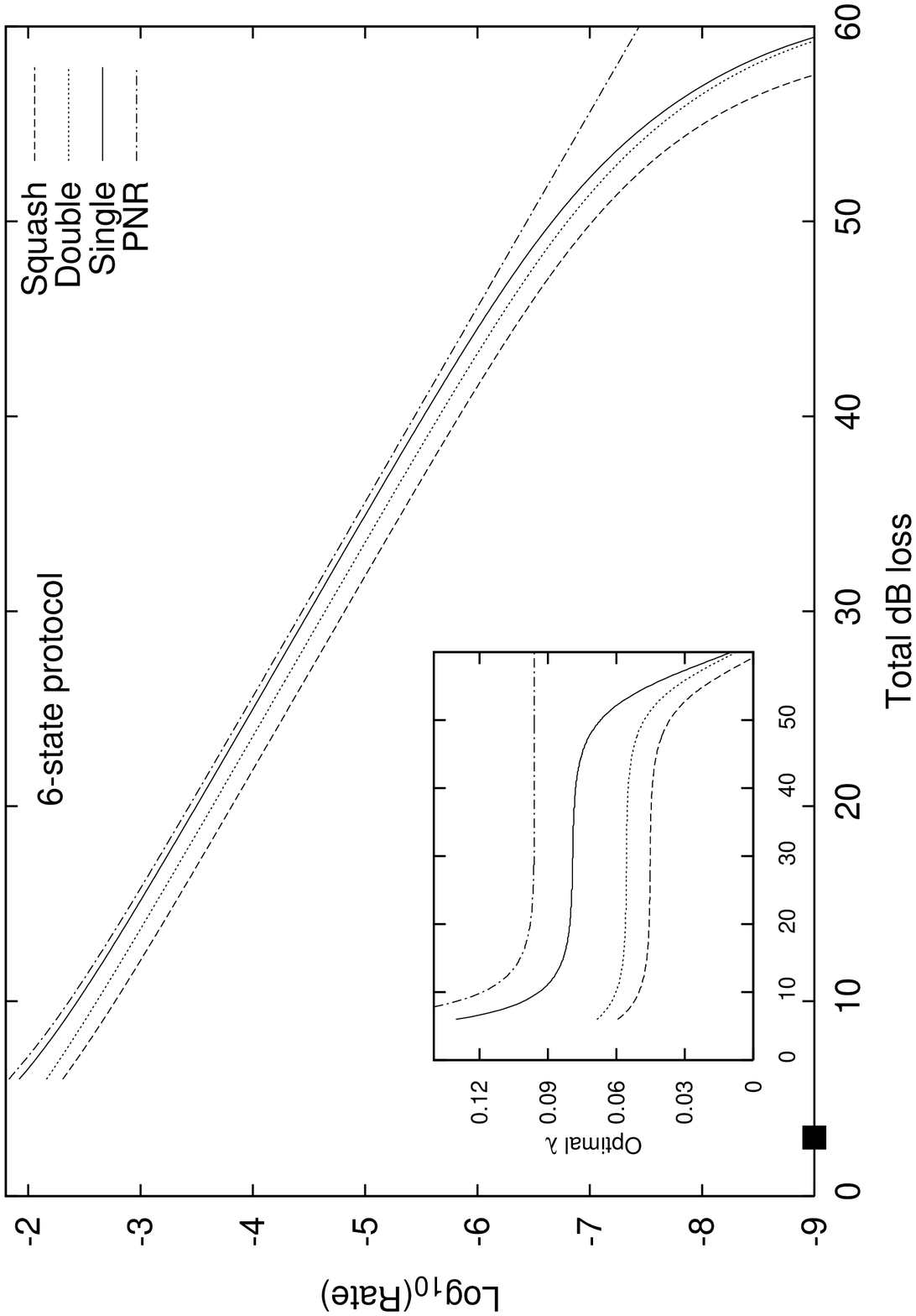}
  \caption{Different lower bounds on the secret key
    rate for the refined decoy detection setup shown in
    Fig.~\ref{fig:harddecoy} with $\epsilon=10^{-6}$ and
    $e=0.03$. The inset figure shows the value for the optimized
    parameter $\lambda$ of the source.}
  \label{fig:ratebb84}
\end{figure}
this situation optimizes the gain of the detector decoy idea over a
security proof that relies on the squash model. In particular, we
assume that Bob is much closer to the source than Alice. Such a
situation might appear often in a QKD network, where certain users can be
further away from the relay stations than others.  
The results for the BB84 and the 6-state protocol are shown, respectively, in
Fig.~\ref{fig:ratebb85} and in Fig.~\ref{fig:ratebb84}. The
first case always corresponds to the situation where Alice and Bob
place a single beam splitter in front of their detection device, while
the second case represents the scenario where the legitimate users
place a variable beam splitter in front of every threshold
detector. The position of the source is denoted by a black   
square and is kept constant at a $\db_{\rm B}=3$ loss distance, so we
only increase the distance towards Alice. For each lower bound we
perform an optimization over the free parameter $\lambda$ that
corresponds to the mean photon pair number. 

It is worth mentioning that the squash model delivers a higher lower
bound on the secret key rate than that corresponding to the detector
decoy method in the double click case. This seems surprising at first
since the detector decoy idea provides the exact knowledge of all
important single photon parameters. Note, however, that in the
discussion which leaded to the lower bound formula given by
Eq.~\ref{eq:rate_2} we restricted ourselves to only draw a secret key from
the single photon contribution. In contrast, the squash model
does not necessarily constrain the parties to obtain a secret key from
the single photon contribution only, but instead attempts to
even draw a secret key from the multiphoton events. In this sense, one
can consider the squash model as a ``calculation method'' that allows
to lower bound the amount of privacy amplification necessary for the
multiphoton events by an equal amount of privacy amplification
``calculated'' on a hypothetical single photon state. Hence using the
squash model directly lower bounds the key rate from
Eq.~\ref{eq:rate1}. On the contrary, the detector decoy idea provides 
a slightly higher secret key rate than the squash model when Alice and
Bob discard their double click events. Note that this action is not
possible with the squash model assumption. For the asymmetric distance
scenario, this fact allows the two parties to drive the source with a
slightly higher mean photon number, since the double clicks that occur
frequently on the side closer to the source can be discarded from the
error rate. See inset plots of the optimized mean photon number in
Figs.~\ref{fig:ratebb85},~\ref{fig:ratebb84}. In the
squash model one has to keep the double click rate low on both sides,
because the penalty in the error rate for each double clicks is
$50\%$. Therefore, one has to use a lower mean photon number. This
effect decreases with the distance, and in the long distance limit
this advantage vanishes. Moreover, note that by 
adding the vacuum gain in the lower bound formula the resulting 
maximal achievable distance is shifted by around $\db=10$. 

We have shown that the detector decoy idea provides a simple method to
adapt a single photon security proof to its full optical
implementation, while still providing similar key rates as those
arising from a security proof using the squash model assumption. Its
main advantage relies on the fact that it can be straightforwardly
applied also to QKD protocols, like the active 6-state protocol, where
the squash model, the other ``adaption technique'', does not work.

\section{``Plug \& Play'' configuration}\label{sec_pp}

The main feature of the Plug \& Play configuration for QKD is that it is
intrinsically stable and polarization independent
\cite{stucki02a,muller97a,ribordy00a}. Apart from synchronization
between Alice and Bob, no further adjustments are necessary. This fact
renders this proposal a promising approach for commercial QKD systems.  

Specifically, in this type of QKD schemes Bob sends to Alice a train of bright
laser pulses through the quantum channel. On the receiving side, Alice first
attenuates the incoming signals to a suitable weak intensity. Afterwards, she
codes the secret key information using phase coding, and sends the resulting
weak pulses back to Bob, who detects them. The main idea behind this
bi-directional quantum communication design is that now the interferometers
used in a practical implementation of the scheme are self-stabilized because
the light passes through them twice. Moreover, if the reflection on AliceÕs
side is done by means of a Faraday mirror, then the polarization effects of
the quantum channel can also be compensated.  

A full security proof of a Plug \& Play system has recently been given in
Ref.~\cite{plug_play_active}. However, the security analysis contained
in Ref.~\cite{plug_play_active} is based on a slightly modification of
the hardware included in the original Plug \& Play proposal. In
particular, Alice performs three measures that enhance the security of
the protocol and also simplify its investigation
\cite{plug_play_active,fasel06}. First, she blocks any undesired
optical mode by means of an optical filter. Then, she performs active phase
randomization. This last action transforms the incoming signals into a
classical mixture of Fock states. Finally, she measures the photon number
distribution of the pulses received in order to estimate some bounds on the
photon number statistics of the output signals. This can be done by randomly
sampling the incoming pulses with an optical switch followed by an intensity
monitor (Case A in Fig.~\ref{plug_play}). The beam splitter that
appears in this figure is used to implement the decoy state method
which improves the whole performance of the scheme.  More recently, a
similar proposal has also been analyzed
\cite{plug_play_pasive}. Basically, it substitutes the optical switch
with a passive beam splitter  (Case B in Fig.~\ref{plug_play}).  
\begin{figure}
  \begin{center}
    \includegraphics[scale=0.66]{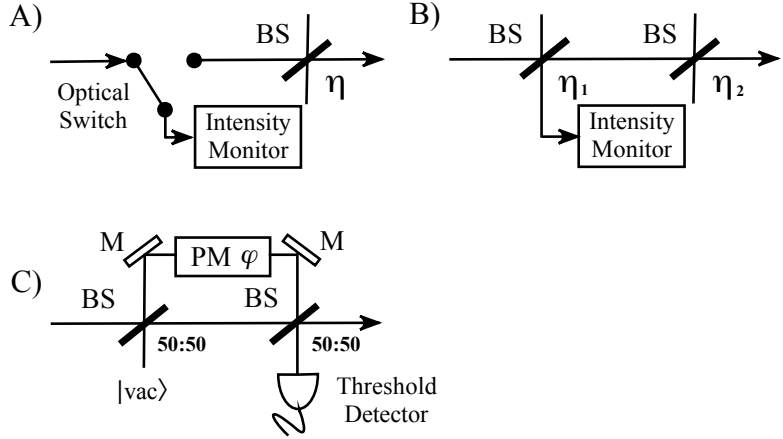}
  \end{center}
  \caption{Case A) Schematic diagram of the detection setup employed by Alice
    to estimate the photon number statistics of the output signals. The
    variable beam splitter which appears in the figure implements the
    decoy state method \cite{plug_play_active}. Case B) Illustration
    of a more recent experimental proposal to achieve the same goal
    \cite{plug_play_pasive}. It uses a passive beam splitter together
    with an intensity monitor. Like before, the second beam splitter in
    the figure is used to realize the decoy state method. Case C)
    Alternative method based on one balanced Mach-Zehnder
    interferometer combined with a threshold detector. PM denotes a
    phase modulator, M represents a mirror, and $\ket{\rm{vac}}$ is a
    vacuum state. \label{plug_play}} 
\end{figure}

In this section, we present very briefly an alternative experimental technique
to estimate the photon number statistics of the output signals. It is based on
the detector decoy idea presented in Sec.~\ref{seca}. Moreover, it allows
Alice to perform the decoy state method simultaneously, {\it i.e.}, without
using an additional variable beam splitter. The scheme is illustrated in
Fig.~\ref{plug_play} as Case C. It consists on a balanced Mach-Zehnder
interferometer combined with a threshold detector. After the active phase
randomization step performed by Alice, the signal states entering the
interferometer are given by Eq.~\ref{inputs}. Now, the probability that the
threshold detector does not click is given by   
\begin{equation}
  \label{prob_vacuum_plug}
  p_{\rm vac}(\varphi)=\sum_{n=0}^{\infty}
  \left(\frac{1-\cos\varphi}{2}\right)^np_n, 
\end{equation}
where $\varphi$ represents the phase imprinted by the phase modulator of the
interferometer. Like before, if Alice varies, independently and at random for
each signal, the phase $\varphi$ of her setup, then, from the observed
probabilities $p_{\rm vac}(\varphi)$, together with the knowledge of the
parameters $[(1-\cos\varphi)/2]^n$, she can estimate the photon number
distribution $p_n$ with high confidence. 
Given that the probabilities $p_n$ are now known, Alice
can also estimate the photon number statistics $q_n$ of the output
signals. These probabilities are given by
\begin{equation}
  q_n=\sum_{m=n}^{\infty}\frac{m!}{n! (n-m)!}\frac{p_m}{2^m}(1-\cos\varphi)^n
  (1+\cos\varphi)^{m-n}.  
\end{equation}
Note that by varying the phase $\varphi$ of her interferometer Alice also
modifies simultaneously the photon number probabilities $q_n$ of the
output signals, as required in the decoy state method, without the
need of an additional beam splitter to perform this task.

\section{Conclusion and outlook }\label{CONC}

In this paper we have analyzed a simple technique which allows the
direct application of a single photon security proof for quantum key
distribution (QKD) to its physical, full optical implementation. This
so-called detector decoy method is conceptually different to that of
the squash model, the other adaptation mechanism. It is based on an
estimation procedure for the photon number distribution of the
incoming light field that uses only a simple threshold detector in
combination with a variable attenuator. The detector decoy method is
similar in spirit to that of the usual decoy state technique: Since
the eavesdropper does not know the particular detection efficiency
setting used to measure the signals, any eavesdropping attempt must
leave the expected photon number distribution unchanged (similar to
the conditional channel losses in the decoy state
technique). 

Specifically, we have investigated an entanglement based QKD scheme
with an untrusted source where Alice and Bob actively choose the
measurement basis of either the BB84 or the 6-state protocol. The
security of both schemes is proven solely by means of the detector decoy
method and without the need of a squash model, which would have to be
proven to be correct for each measurement device anew. Besides, and
opposite to the squash model paradigm, the detector decoy technique
allows the legitimate users to discard double click events from the
raw key data. As a result, it turns out that the secret key rates in
the infinite (or sufficiently large) key rate limit of a BB84
simulated experiment are comparable with each other for both
alternatives. However, the detector decoy idea offers a slightly
better performance in those scenarios where there exists no squash
model, like in the 6-state 
protocol. In any real-life QKD experiment much more obstacles have to
be taken care of and thus the situation can change quite drastically,
mainly because of finite size effects. Nevertheless, for the current
increasing interest in examining the finite size behavior of
different protocols, it can only be of advantage to have a broader
spectrum of different proof techniques available, even if they all
show a similar behavior in the asymptotic key rate limit. 
Finally, as another potential application of the detector decoy method
in QKD, we have briefly described an experimental procedure to
estimate the photon number statistics of the output signals in a
``Plug \& Play'' QKD configuration. We believe there might be
many other potential applications of this method in QKD, like for
instance in Ref.~\cite{fung08a}. In addition, it could also be used to
estimate the single photon contribution in the two state protocol with
a strong reference pulse \cite{ben92,tamaki06suba}.   

To conclude, let us mention that there might be scenarios where it is
not really necessary to insert and vary the transmittance of an
additional beam splitter in the measurement device. For instance, let
us consider the efficient, passive BB84 measurement setup, in which a
beam splitter of high transmittance $\eta=1-\Delta$ splits the
incoming light in favor of one basis versus the other. With this
measurement apparatus, one can obtain directly three different beam
splitter settings to apply the detector decoy formalism: Using the
overall ``no click'' outcome of all detectors gives $\eta_1=1$,
whereas if one ignores all the outcomes of only one basis and looks at
the no click outcomes in the other basis, then one obtains two more
settings, $\eta_2=1-\Delta$ and $\eta_3=\Delta$. Although these three
settings are different from the ones given in Proposition~\ref{prop_n}
they can still provide good estimations of the single photon
contribution. Moreover, the method could be improved even
further. After all, in showing security we have not used all available
information from our measurement results, as the further occurrences of
double or multiclicks in our detection devices has been ignored. The
use of this extra knowledge can only enhance the estimation procedure
and thus can further reduce the number of necessary detector decoy settings. In
fact, it is possible to provide a BB84 security proof by just using an
estimation technique \cite{koashinew}. It might be interesting to
compare the detector decoy idea with the results presented in
Ref.~\cite{koashinew}, and we leave these open questions for further analysis.

\section*{Acknowledgements}

The authors wish to thank H.-K. Lo, W. Mauerer, K. Tamaki, X. Ma and
J. Lavoie for very useful discussions, and in particular M. Genovese
for pointing out important references. M. Curty especially thanks
N. L\"utkenhaus and H.-K. Lo for hospitality and support during his
stay at the Institute for Quantum Computing (University of Waterloo)
and at the University of Toronto, where this manuscript was 
finished. This work was supported by the European Projects SECOQC and
QAP, by the NSERC Discovery Grant, Quantum Works, CSEC, by Xunta de
Galicia (Spain, Grant No. INCITE08PXIB322257PR), and by University of
Vigo (Program ``Axudas \'a mobilidade dos investigadores''). 
 
\section*{References}

\bibliographystyle{iopart-num}

\begin{thebibliography}{10}
\expandafter\ifx\csname url\endcsname\relax
  \def\url#1{{\tt #1}}\fi
\expandafter\ifx\csname urlprefix\endcsname\relax\def\urlprefix{URL }\fi
\providecommand{\eprint}[2][]{\url{#2}}

\bibitem{qkd}
Gisin N, Ribordy G, Tittel W and Zbinden H 2002 {\em Rev. Mod. Phys.\/} {\bf
  74} 145

\bibitem{qkd_dusek}
Du\v{s}ek M, L\"utkenhaus N and Hendrych M 2006 {\em Progress in Optics\/}
  vol~39 ed Wolf E (Elsevier) p 381

\bibitem{qkd_valerio}
Scarani V, Bechmann-Pasquinucci H, Cerf N~J, Du\v{s}ek M, L\"utkenhaus N and
  Peev M A framework for practical quantum cryptography arXiv.org:0802.4155

\bibitem{mayers96a}
Mayers D 1996 {\em Advances in Cryptology \,---\, Proceedings of Crypto '96\/}
  (Berlin: Springer) pp 343--357 available as quant-ph/9606003

\bibitem{mayers01b}
Mayers D 2001 {\em JACM\/} {\bf 48} 351--406

\bibitem{lo99b}
Lo H~K and Chau H~F 1999 {\em Science\/} {\bf 283} 2050

\bibitem{shor00a}
Shor P~W and Preskill J 2000 {\em Phys. Rev. Lett.\/} {\bf 85} 441

\bibitem{chau02a}
Chau H 2002 {\em Phys. Rev. A\/} {\bf 66} 060302(R)

\bibitem{gottesman03a}
Gottesman D and Lo H~K 2003 {\em IEEE Trans. Inf. Theory\/} {\bf 49} 457

\bibitem{devetak05a}
Devetak I and Winter A 2005 {\em Proc. of the Roy. Soc. of London Series A\/}
  {\bf 461} 207

\bibitem{koashi06z}
Koashi M 2006 {\em J. Phys.: Conf. Ser.\/} {\bf 36} 98

\bibitem{kraus05a}
Kraus B, Gisin N and Renner R 2005 {\em Phys. Rev. Lett.\/} {\bf 95} 080501

\bibitem{renner05a}
Renner R, Gisin N and Kraus B 2005 {\em Phys. Rev. A\/} {\bf 72} 012332

\bibitem{stucki02a}
Stucki D, Gisin N, Guinnard O, Ribordy G and Zbinden H 2002 {\em New J.
  Phys.\/} {\bf 4} 41

\bibitem{takesue07a}
Takesue H, Nam S~W, Zhang Q, Hadfield R~H, Honjo T, Tamaki K and Yamamoto Y
  2007 {\em Nature Photonics\/} {\bf 1} 343

\bibitem{rosenberg07a}
Rosenberg D, Harrington J~W, Rice P~R, Hiskett P~A, Peterson C~G, Hughes R~J,
  Lita A~E, Nam S~W and Nordholt J~E 2007 {\em Phys. Rev. Lett.\/} {\bf 98}
  010503

\bibitem{hwang03a}
Hwang W~Y 2003 {\em Phys. Rev. Lett.\/} {\bf 91} 57901

\bibitem{lo05a}
Lo H~K, Ma X and Chen K 2005 {\em Phys. Rev. Lett.\/} {\bf 94} 230504

\bibitem{wang05a}
Wang X~B 2005 {\em Phys. Rev. Lett.\/} {\bf 94} 230503

\bibitem{gottesman04a}
Gottesman D, Lo H~K, L\"utkenhaus N and Preskill J 2004 {\em Quant. Inf.
  Comp.\/} {\bf 4} 325

\bibitem{inamori07a}
Inamori H, L\"utkenhaus N and Mayers D 2007 {\em Eur. Phys. J. D\/} {\bf 41}
  599

\bibitem{scarani04a}
Scarani V, Ac\'in A, Ribordy G and Gisin N 2004 {\em Phys. Rev. Lett.\/} {\bf
  92} 057901

\bibitem{kraus07a}
Kraus B, Branciard C and Renner R 2007 {\em Phys. Rev. A\/} {\bf 75} 012316

\bibitem{wooters82a}
Wooters W~K and Zurek W~H 1982 {\em Nature\/} {\bf 299} 802

\bibitem{huttner95a}
Huttner B, Imoto N, Gisin N and Mor T 1995 {\em Phys. Rev. A\/} {\bf 51}
  1863--1869

\bibitem{brassard00a}
Brassard G, L\"utkenhaus N, Mor T and Sanders B 2000 {\em Phys. Rev. Lett.\/}
  {\bf 85} 1330

\bibitem{bennett84a}
Bennett C~H and Brassard G 1984 {\em Proc. IEEE Int. Conf. on Computers,
  Systems, and Signal Processing, Bangalore, India\/} (New York) p 175

\bibitem{zhao06a}
Zhao Y, Qi B, Ma X, Lo H~K and Qian L 2006 {\em Phys. Rev. Lett.\/} {\bf 96}
  070502

\bibitem{schmitt07a}
Schmitt-Manderbach T, Weier H, F{\"u}rst M, Ursin R, Tiefenbacher F, Scheidl T,
  Perdigues J, Sodnik Z, Kurtsiefer C, Rarity J~G, Zeilinger A and Weinfurter H
  2007 {\em Phys. Rev. Lett.\/} {\bf 98} 010504

\bibitem{dynes07a}
Dynes J~F, Yuan Z~L, Sharpe A~W and Shields A~J 2007 {\em Optics Express\/}
  {\bf 15} 8465

\bibitem{yuan07a}
Yuan Z~L, Sharpe A~W and Shields A~J 2007 {\em Appl. Phys. Lett.\/} {\bf 90}
  011118

\bibitem{PDC_qkd}
Ma X, Fung C~F~F and Lo H~K 2007 {\em Phys. Rev. A\/} {\bf 76} 012307

\bibitem{squash1}
Tsurumaru T and Tamaki K Security proof for {QKD} systems with threshold
  detectors arXiv:0803.4226

\bibitem{squash2}
Beaudry N~J, Moroder T and L{\"u}tkenhaus N 2008 {\em Phys. Rev. Lett.\/} {\bf
  101} 093601

\bibitem{bruss98a}
Bru{\ss} D 1998 {\em Phys. Rev. Lett.\/} {\bf 81} 3018--3021

\bibitem{mogli98}
Mogilevtsev D 1998 {\em Opt. Commun.\/} {\bf 156} 307

\bibitem{rossi04a}
Rossi A~R, Olivares S and Paris M~G~A 2004 {\em Phys. Rev. A\/} {\bf 70} 055801

\bibitem{zambra06a}
Zambra G and Paris M~G~A 2006 {\em Phys. Rev. A\/} {\bf 74} 063830

\bibitem{zambra05a}
Zambra G, Andreoni A, Bondani M, Gramegna M, Genovese M, Brida G, Rossi A and
  Paris M~G~A 2005 {\em Phys. Rev. Lett.\/} {\bf 95} 063602

\bibitem{genovese06a}
Genovese M, Gramegna M, Brida G, Bondani M, Zambra G, Andreoni A, Rossi A~R and
  Paris M~G~A 2006 {\em Laser Physics\/} {\bf 16} 385

\bibitem{genovese08a}
Brida G, Genovese M, Gramegna G, Meda A, Olivares S, Paris M~G~A, Piacentini F,
  Predazzi E and Traina P A review on recent results on on/off reconstruction
  of optical states arXiv:0810.5472

\bibitem{tamaki06suba}
Tamaki K, L\"utkenhaus N, Koashi M and Batuwantudawe J Unconditional security
  of the {B}ennett 1992 quantum key-distribution scheme with strong reference
  pulse, arXiv.org/quant-ph/0607082

\bibitem{muller97a}
Muller A, Herzog T, Huttner B, Tittel W, Zbinden H and Gisin N 1997 {\em Appl.
  Phys. Lett.\/} {\bf 70} 793

\bibitem{ribordy00a}
Ribordy G, Gautier J~D, Gisin N, Guinnard O and Zbinden H 2000 {\em J. Mod.
  Opt.\/} {\bf 47} 517--531

\bibitem{cabrera98a}
Cabrera B, Clarke R~M, Colling P, Miller A~J, Nam S and Romani R~W 1998 {\em
  Appl. Phys. Lett.\/} {\bf 73} 735

\bibitem{kim99a}
Kim J, Takeuchi S, Yamamoto Y and Hogue H~H 1999 {\em Appl. Phys. Lett.\/} {\bf
  74} 902

\bibitem{achilles03a}
Achilles D, Silberhorn C, Sliwa C, Banaszek K, Walmsley I~A, Fitch M~J, Jacobs
  B~C, Pittman T~B and Franson J~D 2003 {\em Opt. Lett.\/} {\bf 28} 2387

\bibitem{lundeen08a}
Lundeen J~S, Feito A, Coldenstrodt-Ronge H, Pregnell K~L, Silberhorn C, Ralph
  T~C, Eisert J, Plenio M~B and Walmsley I~A Measuring measurement
  arXiv:0807.2444

\bibitem{mauerer07a}
Mauerer W and Silberhorn C 2007 {\em Phys. Rev. A\/} {\bf 75} 050305(R)

\bibitem{mauerer08a}
Mauerer W, Helwig W and Silberhorn C 2008 {\em Ann. Phys.\/} {\bf 17} 158

\bibitem{detector}
Rohde P~P and Ralph T~C 2006 {\em J. Mod. Op.\/} {\bf 53} 1589

\bibitem{ma05a}
Ma X, Qi B, Zhao Y and Lo H~K 2005 {\em Phys. Rev. A\/} {\bf 72} 012326

\bibitem{linearprogramming}
Bazaraa M~S, Jarvis J~J and Sherali H~D 2004 {\em Linear Programming and
  Network Flows\/} 3rd ed (New York, Chichester: John Wiley \& Sons)

\bibitem{tsurumaru}
Tsurumaru T, Soujaeff A and Takeuchi S 2008 {\em Phys. Rev. A\/} {\bf 77}
  022319

\bibitem{yurke}
Yurke B 1985 {\em Phys. Rev. A\/} {\bf 32} 311

\bibitem{nor_99}
L\"utkenhaus N 1999 {\em Phys. Rev. A\/} {\bf 59} 3301

\bibitem{urssinnlos}
Ursin R, Tiefenbacher F, Schmitt-Manderbach T, Weier H, Scheidl T, Lindenthal
  M, Blauensteiner B, Jennewein T, Perdigues J, Trojek P, {\"O}mer B,
  F{\"u}erst M, Meyenburg M, Rarity J, Sodnik Z, Barbieri C, Weinfurter H and
  Zeilinger A 2007 {\em Nature Physics\/} {\bf 3} 481

\bibitem{ling08a}
Ling A, Peloso M~P, Marcikic I, Scarani V, Lamas-Linares A and Kurtsiefer C
  Experimental quantum key distribution based on a bell test arXiv:0805.3629

\bibitem{erven08}
Erven C, Couteau C, Laflamme R and Weihs G Entangled quantum key distribution
  over two free-space optical links arXiv:0807.2289

\bibitem{spacepaper}
Ursin R and et al Space-{QUEST}: Experiments with quantum entanglement in space
  arXiv:0806.0945

\bibitem{renes07a}
Renes J~M and Smith G 2007 {\em Phys. Rev. Lett.\/} {\bf 98} 020502

\bibitem{smith08a}
Smith G, Renes J~M and Smolin J~A 2008 {\em Phys. Rev. Lett.\/} {\bf 100}
  170502

\bibitem{lo05b}
Lo H~K 2005 {\em QIC\/} {\bf 5} 5

\bibitem{efficientBB84}
Lo H~K, Chau F and Ardehali M 2005 {\em J. of Cryptology\/} {\bf 18} 133

\bibitem{makarov06a}
Makarov V, Anisimov A and Skaar J 2006 {\em Phys. Rev. A\/} {\bf 74} 022313

\bibitem{qi07a}
Qi B, Fung C~H~F, Lo H~K and Ma X 2007 {\em QIC\/} {\bf 7} 73

\bibitem{fung08a}
Fung C~H~F, Tamaki K, Qi B, Lo H~K and Ma X Security proof of quantum key
  distribution with detection efficiency mismatch arXiv.:0802.3788

\bibitem{lydersen08a}
Lydersen L and Skaar J Security of quantum key distribution with bit and basis
  dependent detector flaws arXiv:0807.0767

\bibitem{moroder06a}
Moroder T, Curty M and L{\"u}tkenhaus N 2006 {\em Phys. Rev. A\/} {\bf 73}
  012311

\bibitem{koashinew}
Koashi M, Adachi Y, Yamamoto T and Imoto N Security of entanglement-based
  quantum key distribution with practical detectors arXiv.org:0804.0891

\bibitem{PDC_source}
Kok P and Braunstein S~L 2000 {\em Phys. Rev. A\/} {\bf 61} 042304

\bibitem{norbertcomment}
Brassard G, Mor T and Sanders B~C Quantum cryptography via parametric
  downconversion quant-ph/9906074

\bibitem{plug_play_active}
Zhao Y, Qi B and Lo H~K 2008 {\em Phys. Rev. A\/} {\bf 77} 052327

\bibitem{fasel06}
Gisin N, Fasel S, Kraus B, Zbinden H and Ribordy G 2006 {\em Phys. Rev. A\/}
  {\bf 73} 022320

\bibitem{plug_play_pasive}
Peng X, Jiang H, Xu B, Ma X and Guo H Experimental quantum key distribution
  with an untrusted source arXiv:0806.1671

\bibitem{ben92}
Bennett C~H 1992 {\em Phys. Rev. Lett.\/} {\bf 68} 3121

\end{thebibliography}

\providecommand{\newblock}{}

\end{document}